\documentclass[11pt]{article}
\usepackage{graphicx}
\usepackage{times}
\usepackage{fullpage}
\usepackage{epsf}
\usepackage{color}
\usepackage{xcolor}
\usepackage{amsmath,amsfonts}
\usepackage{graphics} 
\usepackage{hyperref}
\usepackage{amsmath,amssymb,booktabs,tabularx,array,microtype}
\definecolor{forestgreen}{rgb}{0.13, 0.55, 0.13}
\definecolor{bananayellow}{rgb}{1.0, 0.88, 0.21}

\usepackage{lineno}
\usepackage{algorithmic}
\usepackage[ruled,vlined,commentsnumbered,titlenotnumbered]{algorithm2e}

\def\showlabel#1{}

\newfont{\mycrnotice}{ptmr8t at 7pt}
\newfont{\myconfname}{ptmri8t at 7pt}

\def\maxringsize{{18}}

\SetCommentSty{mycommfont}
\SetKwBlock{Loop}{loop}{}

\newcommand\cC{{\mathcal C}}
\newcommand\cZ{{\mathcal Z}}
\newcommand\cT{{\mathcal T}}
\newcommand\cR{{\mathcal R}}
\newcommand\cK{{\mathcal K}}
\newcommand\cKC{{\mathcal KC}}
\newcommand\cD{{\mathcal D}}

\renewcommand{\hat}{\widehat}

\newtheorem{proposition}{Proposition}
\newtheorem{lemma}[proposition]{Lemma}

\newtheorem{theorem}[proposition]{Theorem}

\newtheorem{task}[proposition]{Task}

\newcommand{\qed}{\hbox{\rule{6pt}{6pt}}}
\newenvironment{proof}[1][]{\paragraph{Proof{#1}}}{\hfill\qed\medskip\\}

\newcommand\boundary\partial

\newcommand\drop[1]{}

\newcommand\req[1]{(\ref{#1})}

\newcounter{invari}
\begin{document}

\title{Four-Coloring Planar Graphs in Linear Time}

\author{
    {\em Yuta Inoue}\thanks{The University of Tokyo, Tokyo, Japan, \texttt{yutainoue@is.s.u-tokyo.ac.jp}.\\  
    Supported by JSPS Kakenhi 26K21777 and JP25K24465, by JST ASPIRE JPMJAP2302, and by Hirose Foundation Scholarship.}
    \and
    {\em Ken-ichi Kawarabayashi}\thanks{National Institute of Informatics \& The University of Tokyo, Tokyo, Japan, \texttt{k\_keniti@nii.ac.jp}.\\ 
    Supported by Alexander von Humboldt Foundation (Humboldt Research Award), JSPS Kakenhi 26K21777 and JP25K24465, and by JST ASPIRE JPMJAP2302.}
    \and
    {\em Atsuyuki Miyashita}\thanks{The University of Tokyo, Tokyo, Japan, \texttt{miyashita-atsuyuki869@is.s.u-tokyo.ac.jp}.\\  
    Supported by JSPS Kakenhi 26K21777 and JP25K24465, by JST ASPIRE JPMJAP2302, and by JSPS KAKENHI 26KJ0817.}
    \and
    {\em Bojan Mohar}\thanks{Simon Fraser University, Burnaby, BC, Canada \& FMF, University of Ljubljana, Slovenia, \texttt{mohar@sfu.ca}.\\ 
    Supported in part by NSERC Discovery Grant R832714 (Canada),
    by the ERC Synergy grant (European Union, ERC, KARST, project number 101071836),
    and by the Research Project N1-0218 of ARIS (Slovenia).}
    \and
    {\em Carsten Thomassen}\thanks{Technical University of Denmark, \texttt{ctho@dtu.dk}}
    \and
     {\em Mikkel Thorup}\thanks{ University of Copenhagen, Denmark, 
 \texttt{mikkel2thorup@gmail.com}.\\ 
 Supported by VILLUM Foundation grant 54451, Basic Algorithms Research Copenhagen (BARC).}}

\maketitle

\begin{abstract}
The famous Four Color Theorem (4CT) states that any planar graph can be colored with four colors, but how do we find a 4-coloring? In our previous paper [FOCS'26], we presented a simple combinatorial algorithm to 4-color planar graphs in $O(n\log n)$ time. We improve it here to $O(n)$ time, which is optimal.

The algorithm from our previous paper is based on proving
that every planar triangulation has a linear number of reducible configurations or obstructing cycles. 
The bottleneck for the algorithm is that we need $O(\log n)$ Kempe changes to extend a recursive 4-coloring to the reducible configurations. Each Kempe change can involve a constant fraction of the graph and takes linear time.

Here we show that 
all $O(\log n)$ Kempe changes can be identified and implemented in $O(n)$ time. The idea is to spend $O(n)$ time to preprocess the graph so that we can identify and simulate each good Kempe change in $O(n/\log n)$ amortized time. Thus, we implement all $O(\log n)$ Kempe changes in $O(n)$ total time. Based on this, we implement the whole 4-coloring in linear time. 

We note that creating such a representation for fast identification and simulation of the right Kempe changes is a challenge of data structures, unlike the previous paper that builds on
a new graph theoretic understanding of reducible configurations.
\end{abstract}
\newpage

\section{Introduction}
The celebrated Four Color Theorem (4CT) 
was conjectured by Francis Guthrie in 1852 (see \cite{madison1897,tait1880dual} for some of the early history), and remained open for more than 100 years, until Appel and Haken found a proof \cite{4ct1, 4ct2} in 1976. The theorem states that every planar map or graph with no loops can be colored with four colors. So 4-coloring is always possible, but how do we 4-color a concrete planar graph? There is no easy answer. In fact, historically, it has been so complicated that it has led to many proposed counterexamples to 4CT.

Robertson, Sanders, Seymour, and Thomas \cite{RSST, RSST-STOC} 
found a simpler proof and
proved in 1997 that a 4-coloring can be found in quadratic time.

In a preceding paper \cite{nearlinear}, we proved that 4-coloring can be done in $O(n\log n)$ time. In this paper, we show how to do it in linear time, that is,
\begin{theorem}\label{thm:linear-time}
    We can 4-color a planar graph of order $n$ in $O(n)$ time.
\end{theorem}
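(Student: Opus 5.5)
We reduce to a triangulation and follow the recursive scheme of \cite{nearlinear}. First, in $O(n)$ time we add edges to make the input a planar triangulation; any 4-coloring of the triangulation colors the original graph. From \cite{nearlinear} we take the structural result: every planar triangulation on $n$ vertices has either $\Omega(n)$ pairwise disjoint (non-interfering) reducible configurations, or an obstructing separating cycle of bounded length. The second case is handled by splitting along the short separating cycle. We recurse on both sides and glue the two colorings; the coloring of the bounded-size cycle is matched by a permutation of colors plus a constant number of Kempe changes. We need to make sure the split is balanced enough, or else charged to the smaller side, so that this part costs $O(n)$ in total.

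In the main case we find a linear number of disjoint reducible configurations in $O(n)$ time by local search. We contract or reduce them all simultaneously, getting a triangulation with at most $(1-\varepsilon)n$ vertices, and 4-color it recursively. If uncoloring and re-extending costs $O(n)$ per level, the recursion $T(n)=T((1-\varepsilon)n)+O(n)$ gives $T(n)=O(n)$.

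The extension step is where the work lies. By \cite{nearlinear}, each configuration either extends directly or after a bounded number of Kempe changes. Choosing the changes well, the configurations can be processed in $O(\log n)$ rounds: each round performs one global batch of Kempe changes that fixes a constant fraction of the still-unextendable configurations. To keep the whole extension at $O(n)$, I would do $O(n)$ preprocessing and then identify and simulate each round in $O(n/\log n)$ amortized time. The plan has four parts.

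\begin{enumerate}
\item Compress the graph by the word-RAM / Four-Russians technique. Partition it with an $r$-division into regions of size $r=\Theta(\log\log n)$ or polylog, with $O(n/\sqrt r)$ boundary vertices in total. Precompute tables giving, for each region and each coloring of its boundary, the Kempe-chain connectivity among its boundary vertices for every color pair.
\item Maintain Kempe components for each of the three color pairings as union-find structures over boundary vertices only. Pack the interior colorings of each region into machine words, so that flipping a chain inside a region is a table lookup.
\item Apply Kempe changes lazily. A flip is recorded as a pending color permutation on a component, and is pushed into a region only when that region is inspected.
\item Per round, run a bounded number of passes over the boundary graph, of size $O(n/\log n)$ for a suitable $r$, to decide which configurations benefit and to apply the changes.
\end{enumerate}

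The main obstacle is step 4, specifically keeping the boundary-level Kempe-chain structure consistent after flips. A Kempe change on one color pair changes the chains of the other pairs, so the cached connectivity would have to be recomputed. To avoid a full recomputation, I would recompute only in the regions a flip actually touches, via table lookup, and argue by a potential or amortization argument that the total touched work over all $O(\log n)$ rounds is $O(n)$. If that fails, the fallback is a geometric schedule: the number of configurations still pending drops by a constant factor each round, and we rebuild the compressed representation on the shrinking affected subgraph, so the total work is a geometric sum bounded by $O(n)$.
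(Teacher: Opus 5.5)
Your top-level skeleton matches the paper's: $O(n)$ preprocessing, then $O(n/\log n)$ time per Kempe change over $O(\log n)$ changes, inside the recurrence $T(n)=T((1-\varepsilon)n)+O(n)$. But the part that actually carries the theorem is missing, and the mechanisms you propose for it would not work.

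First, a single-level $r$-division cannot reach $O(n/\log n)$ per change. Table look-ups need each region \emph{with its current interior coloring} coded in $O(\log n)$ bits. Indexing by the boundary coloring is not enough, since Kempe connectivity among boundary vertices depends on interior colors, which change every round. This forces $r=O(\log n)$. But then the boundary graph has $\Theta(n/\sqrt{\log n})$ vertices in general, and one pass over it per change costs $\Theta(n\sqrt{\log n})$ in total. With $r=\Theta(\log^2 n)$ the boundary is small enough, but regions of $\log^2 n$ vertices can neither be tabulated nor packed into a word.

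Your amortization fallback does not rescue this. A single Kempe change can recolor vertices in a constant fraction of all regions, so no potential bounds the ``touched'' work. The geometric decrease of pending configurations does not shrink the relevant graph either: the Kempe chains joining their rings, for each of the three pairings and recolored after every change, run through the whole colored graph.

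The paper resolves this with a recursive $\vec r$-division. Leaves of size $\Theta(\log n)$ are handled by look-up. Boundary structures up to $r_i=\Theta(\log^2 n)$ are coded in $O(\sqrt{r_i})$ bits using hole-walk information, so they can be tabulated too. Higher levels are handled combinatorially. The paper also does not maintain Kempe chains at all. For every change it recomputes the face-contracted graphs $\partial R^{\cK}_R$ bottom-up and $\partial R^{\cK}_{G-E(R)}$ top-down from scratch, and pushes the swap bits top-down, all in $O(n/\log n)$ time.

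Second, you never say how the swaps are chosen. Random swaps give only expected linear time. For the deterministic bound one must run the method of conditional expectations. This requires, for each Kempe chain, aggregated counts over all live configurations it touches, possibly $\Omega(n)$ of them spread over many regions. This is the paper's main technical contribution:
\begin{itemize}
\item Configurations touching a batch of $k$ chains are grouped into main types; by planarity (Lemma~\ref{lem:Kempe-config}) there are only $O(k)$ of them.
\item Main types are refined into $O(1)$ subtypes with counters and propagated bottom-up, so that each chain is decided in the region where it stops.
\item For $(\log n/\log\log n)^2<r_i<\log^2 n$, the $\Theta(\sqrt{r_i}\log r_i)$ bits of counters exceed a table index. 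There the counter arrays are cut into segments and transferred segment-pair by segment-pair, using $O((\log\log n)^2)$ look-ups per region.
\end{itemize}

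A smaller issue: \cite{nearlinear} gives linearly many non-touching obstructing cycles, not a single one. With one cycle per step, the split can remove only $O(1)$ vertices from the large side while finding the cycle already costs linear time. So your ``balanced or charged to the smaller side'' argument does not yield $O(n)$.
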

This optimal time bound builds on the graph-theoretic results from
\cite{nearlinear} which yielded a combinatorial $O(n\log n)$ time solution. Our improvement to $O(n)$ time is based on a new efficient data structure. More precisely, we use the word RAM model with word length $\Theta(\log n)$. The unit-cost operations are the standard AC$^0$ operations available in an imperative programming language such as C \cite{kernighan1978c}: direct and indirect addressing, conditional jump, addition, bitwise Boolean operations, and shifts (called the Practical RAM in  \cite{miltersen1996lower}).

\subsection{Graph theoretic foundations}
In order to describe our contribution in more detail, we first review the relevant concepts and results from previous work on 4CT. If we just say \emph{coloring}, we mean a proper $4$-coloring with colors 1, 2, 3, 4.

By a \emph{planar triangulation} (or just \emph{triangulation}) we mean a (simple) graph embedded in the 2-sphere, all of whose faces are triangles. The 4CT says that any planar graph can be 4-colored, but since triangulations can only make coloring harder, it suffices to consider planar triangulations. 
We say a cycle $C$ is \emph{obstructing} if it is of length at most 4 with at least one vertex on each side, or of length 5, with at least 2 vertices on each side.
Birkhoff proved in \cite{birkhoff} that a minimal counterexample to 4CT has no obstructing cycles.

\paragraph{Kempe chains and Kempe changes}

Given a 4-colored graph $G$ and two colors $a,b$,
an \emph{$ab$-colored component} is a component of the subgraph of $G$ induced by the vertices of color $a$ or $b$.
To define Kempe chains and changes, we first need to decide a \emph{color pairing} $\{a,b\}$ and $\{c,d\}$ with $\{a,b,c,d\}=\{1,2,3,4\}$. The \emph{Kempe chains}
are now the $ab$- and $cd$-colored components. For a \emph{Kempe change}, we swap the colors on a subset of the Kempe chains; that is, for an $ab$-colored component, a swap switches $a$ and $b$, and on a $cd$-colored component, it switches $c$ and $d$. Such a Kempe change preserves a proper 4-coloring. The use of Kempe changes was pioneered by Kempe in \cite{kempe1879}, and has been used in all later correct proofs of 4CT.

\paragraph{Configurations}
A key concept from the classic proof of the 4CT \cite{AppelHaken89,RSST} is that of a reducible configuration. Roughly speaking, a reducible configuration is a locally defined configuration such that, if it appears in a triangulation, it cannot be a minimal counterexample to 4CT.
To define reducible configurations formally, we first need to define configurations.

Sometimes we view the triangulation $G$ we want to color as embedded in the plane, and we may or may not treat one of the faces as the outer face. A \emph{near-triangulation} is a connected graph embedded in the plane such that all facial walks, except for the outer face, are triangles. 
This includes the \emph{trivial near-triangulations} consisting of a single edge or vertex.

By a \emph{configuration} we mean a pair $(Z,\delta)$, where $Z$ is a near-triangulation and $\delta: V(Z)\to {\mathbb N}$ is the \emph{degree function} satisfying the following condition: if $v\in V(Z)$ is a vertex that is not on the boundary of $Z$, then $\delta(v)=d_Z(v)$; otherwise $\delta(v)>d_Z(v) $. Moreover, if $v$
is a cut-vertex of $Z$, then
$\delta(v)=d_Z(v)+2$.

We say that a plane triangulation $G$ \emph{contains the configuration $(Z,\delta)$} if $Z$ is embedded in $G$ such that every inner triangular face of $Z$ is also a face in $G$, and all triangular faces of $Z$ have the same orientation in $G$. Moreover, we require that the degree of each vertex $v\in V(Z)$ in $G$ is equal to $\delta(v)$, that is $d_G(v)=\delta(v)$. Finally, if $v$ is a cut vertex in $Z$, its two neighbors outside $Z$ in $G$ are non-consecutive around $v$.
We say that $Z$, or the configuration $(Z,\delta)$, is \emph{induced} in $G$ if $Z$ is an induced subgraph of $G$.

\begin{figure}[htb]
    \centering
    \includegraphics[width=0.62\textwidth]{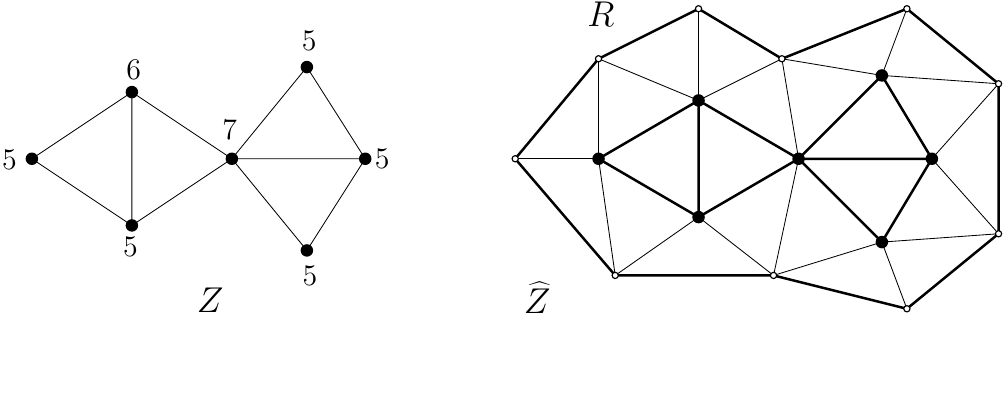}
    \caption{A D-reducible configuration $(Z,\delta)$ with a cut-vertex and its free completion.
    }
    \label{fig:extended configuration}
\end{figure}

If $(Z,\delta)$ is a configuration, the \emph{free completion of $(Z,\delta)$} is the near-triangulation $\widehat Z$ obtained from $Z$ by adding a cycle $R=R(Z)$, disjoint from $Z$ and joining vertices on the outer face of $Z$ to $R$ in such a way that $R$ becomes the outer face of the near-triangulation $\widehat Z$ and such that $\widehat Z$ contains $(Z,\delta)$. See the illustration in Figure \ref{fig:extended configuration}.

The length of the cycle $R$ is precisely
$$
   |R| = \sum_{v\in V(Z)} (\delta(v) - d_Z(v)) - t = \sum_{v\in V(Z)} \delta(v) - 2|E(Z)| - t,
$$ 
where $t$ is the length of the outer facial walk of $Z$. The cycle $R = R(Z)$ is called the \emph{ring} of the free completion.

\paragraph{Reducible configurations}
Suppose that $G$ contains an induced configuration $(Z,\delta)$. Since $Z$ is induced in $G$, the triangulation $G$ contains the free completion of $Z$, possibly with some vertices and edges on $R(Z)$ that may be identified in $G$. Equivalently, when $Z$ is induced, the embedding $\phi$ of $Z$ into $G$ can be extended to a homomorphism $\phi$ from $\hat Z$ into $G$, but this homomorphism may not be 1-1 on $R(Z)$.

Every 4-coloring $\psi_0$ of $G-V(Z)$ induces a 4-coloring 
of the ring $R=R(Z)$ in the free completion. If this 4-coloring can be extended to a 4-coloring $\psi_1$ of $\widehat Z$, then the combination of $\psi_0$ and $\psi_1$ gives a 4-coloring of $G$. In the proof of 4CT, special configurations are shown to be reducible. This happens if any 4-coloring $\psi_0$ can be changed (by using a sequence of Kempe changes in $G-V(Z)$) so that the resulting coloring extends from $R$ to $\widehat Z$. We describe the precise definition and the process related to Kempe changes next.

We say that a coloring $\psi$ of $R$ is \emph{$0$-extendible} (or just \emph{extendible}) to $Z$ if it can be extended to a coloring of $\widehat Z$. 
For each integer $i>0$ we say that $\psi$ is \emph{$i$-extendible} to $Z$ if it is either $(i-1)$-extendible or the following holds: If $G'$ is an arbitrary plane graph containing the ring $R$ as a facial cycle and $\psi$ is any 4-coloring of $G'$ extending $\psi$, then there is a Kempe change of $\psi$ such that the resulting
4-coloring on $R$ after making the Kempe change is an $(i-1)$-extendible 4-coloring of $R$. We define the \emph{reducibility level} (or just \emph{level}) of $\psi$ as the minimum $i$ such that $\psi$ is $i$-extendible.
A Kempe change in $G-V(Z)$ is called an \emph{improving Kempe change} if it changes the coloring of $R$ to a lower level.

We say that the configuration $(Z,\delta)$ is D$^i$-\emph{reducible} if every 4-coloring of its ring is $i$-extendible, and D-\emph{reducible} means D$^i$-reducible for some finite $i$. 
Exercising the definition, we have

\begin{lemma}\label{lem:3-4}
    A single vertex of degree 3 or 4 is D-reducible. It is 0-extendible if the degree is 3 and 1-extendible if the degree is 4.
\end{lemma}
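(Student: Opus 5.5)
The plan is to unwind the definitions directly. Let $Z$ be the single vertex $v$ with $\delta(v)=k\in\{3,4\}$. Then $d_Z(v)=0$, the outer facial walk of $Z$ has length $t=0$, and the formula gives $|R|=k$. The free completion $\widehat Z$ is therefore the wheel: a $k$-cycle $R=r_1r_2\cdots r_k$ with $v$ adjacent to every $r_i$. A coloring $\psi$ of $R$ extends to $\widehat Z$ exactly when $\psi$ uses at most three colors on $R$, since then $v$ receives a missing color. It is also worth noting the two implications between levels that the definition gives us: every $0$-extendible coloring is $1$-extendible, and $i$-extendibility implies $(i+1)$-extendibility.

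\textbf{Degree 3.} Here $R$ is a triangle, so any proper coloring of $R$ uses exactly three colors. The fourth color is then free for $v$. Hence every coloring of $R$ is $0$-extendible, and $Z$ is D$^0$-reducible.

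\textbf{Degree 4.} Here $R=r_1r_2r_3r_4$. If $\psi$ uses at most three colors on $R$, it is $0$-extendible. Otherwise all four colors appear, say $\psi(r_i)=i$. Let $G'$ be any plane graph having $R$ as a facial cycle, and let $\psi$ be any 4-coloring of $G'$ extending this ring coloring. I will run Kempe's classical argument, using the color pairing $\{1,3\},\{2,4\}$.

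Consider the $13$-colored component $K$ of $G'$ containing $r_1$. If $r_3\notin K$, swapping $1$ and $3$ on $K$ alone gives a Kempe change after which $r_1$ and $r_3$ both have color $3$. The ring then uses only three colors, so the new coloring is $0$-extendible. If $r_3\in K$, there is a $13$-colored path $P$ from $r_1$ to $r_3$ in $G'$.

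\textbf{Main obstacle: the planarity step.} In that second case I need to show that $r_2$ and $r_4$ lie in different $24$-colored components. I would argue as follows. Since $R$ bounds a face $f$ of $G'$, I can add a new vertex $x$ inside $f$ adjacent to $r_1$ and $r_3$; this keeps the graph planar. Together with $P$, this produces a closed curve $C=P\cup r_1xr_3$. The vertices $r_2$ and $r_4$ sit on opposite sides of the chord $r_1xr_3$ within the face, so they lie on opposite sides of $C$. By the Jordan curve theorem, any $r_2$–$r_4$ path in $G'$ must meet $P$. A $24$-colored path contains no vertex of $P$, which is $13$-colored, so no such path exists.

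Swapping $2$ and $4$ on the $24$-component of $r_2$ is then a valid Kempe change that leaves $r_4$ unchanged. The ring now uses only three colors. So in every case there is a Kempe change making the ring coloring $0$-extendible. This shows $\psi$ is $1$-extendible, and the degree-4 vertex is D$^1$-reducible. In particular, both configurations are D-reducible.
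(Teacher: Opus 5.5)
Your proposal is correct and follows the paper's proof essentially verbatim. In the degree-3 case $v$ gets the free fourth color; in the degree-4 case you run Kempe's argument on the chain of $r_1$ in the colors of $r_1,r_3$, swapping it if it misses $r_3$ and otherwise swapping the chain of $r_2$ in the other pair. The only difference is that you spell out the Jordan-curve separation step (via an auxiliary vertex placed in the face bounded by $R$), which the paper leaves implicit.
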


\begin{proof}
    If the vertex has degree 3, we can remove it, color the remaining graph, and then put it back, assigning a color not used in its ring.
    If the vertex $v$ has degree 4, we also remove it and color the rest. If the ring has at most three
    colors, we are done as above. Otherwise, we need a Kempe change. Let $v_0,\ldots, v_3$ be the ring vertices in the order they appear 
    around $v$. Look at the Kempe chain $A$ of $v_0$ induced by the colors of $v_0$ and $v_2$. If $A$ does not include $v_2$, we simply perform the Kempe change on $A$. Otherwise, $A$ connects $v_0$ and $v_2$, so we can perform a Kempe change on $v_1$ to give it the same color as $v_3$. In each case, the modified coloring can be extended to $v$. The proof for the degree-4 case goes back to Kempe \cite{kempe1879}.
\end{proof}
It is easy to see \cite{AllaireSwart78,Franklin1922} that a minimum counterexample to 4CT cannot contain D-reducible configurations.

Note that a D-reducible configuration $(Z,\delta)$ may be contained as a non-induced subgraph of a triangulation $G$. In that case we may not be able to use Kempe-changes to extend a 4-coloring of $G-V(Z)$ to $Z$. Nevertheless, we still say that the subgraph $Z$ in $G$ is D-reducible.

We note that many works on 4CT, including the first proof in \cite{AppelHaken89}, use a more general notion of reducibility, called C-reducibility introduced by Franklin \cite{Franklin1922} and used in \cite{AppelHaken89} and \cite{RSST}; see also \cite{AllaireSwart78}. With C-reducibility, we do not just remove $V(Z)$ from $G$, but we also do some contractions on the resulting face. The proofs of 4CT in 
\cite{AppelHaken89,RSST} show that any triangulation contains either an obstructing cycle or a D- or C-reducible configuration from a given finite set of such configurations. However, \cite{steinberger2010unavoidable} proved that a set of D-reducible configurations suffices, which will be important to us.

In this paper, we mostly discuss configurations $(Z,\delta)$ with a given embedding in $G$, where for each vertex $v\in V(Z)$, $\delta(v)=d_G(v)$. Therefore, we will often
identify such a configuration with the embedded near-triangulation $Z$.

\paragraph{A quadratic algorithm.}
In \cite{RSST-STOC}, the authors derive a quadratic-time 4-coloring algorithm. 

The proofs above imply that for any triangulation, we can find either an obstructing cycle or a reducible configuration in linear time.

If we find a reducible configuration $Z$, we reduce it and color the rest by induction. This gives a coloring of the ring $R$. If it is an $i$-extendible configuration, then we may have to perform up to $i$ improving Kempe changes before we can extend the coloring to the original $Z$. 

To tell if a Kempe chain is improving, we need to know the level of all colorings of the ring $R$. This is only a constant amount of information, since we only have a constant number of reducible configurations of constant size.

Also note that we only know that an improving Kempe change exists, but which one works depends on the graph and the coloring outside the ring. Recall that a Kempe change is characterized by color pairing $\{a,b\}$ and $\{c,d\}$ of colors and a subset of $ab$- and $cd$-colored Kempe chains on which the two colors are swapped. The swapping only affects the coloring of the ring if the Kempe chain
intersects the ring. This means that there are fewer than $3\cdot 2^{|R|}=O(1)$ relevant Kempe changes to check, and we apply the first improving one we find. We can implement all of this in linear total time.

\cite{RSST-STOC} also shows how we can benefit from obstructing cycles. The general conclusion is that we can reduce the problem size by an additive constant in linear time, so the overall algorithm is quadratic. 

The main bottleneck in the above approach is the Kempe changes. The interaction between different Kempe chains can be very complicated. Each Kempe chain can be of linear size, and if, for example, we first perform a Kempe change using color pairs $a, b$ and $c, d$, then it can affect the $ac$, $ad$, $bc$, and $bd$-colored Kempe chains. So, multiple Kempe changes can be quite chaotic. It seems hopeless to update the information about Kempe chains in sublinear time.

\subsection{Our previous work}\label{sec:near-linear}

As mentioned earlier, our optimal $O(n)$ coloring is an efficient implementation of our previous combinatorial $O(n\log n)$ algorithm from \cite{nearlinear}. In the following we review the relevant parts of \cite{nearlinear}.
Let us recall that two subgraphs $X$ and $Y$ of $G$ \emph{touch} if either $V(X)\cap V(Y)\ne \varnothing$, or there is an edge $xy\in E(G)$ with $x\in V(X)$ and $y\in V(Y)$.

\paragraph{Linear number of reducible configurations}
The main result from our preceding paper \cite{nearlinear} is as follows.

\begin{theorem}[{\cite{nearlinear}}]\label{thm:main-linear}\showlabel{thm:main-linear}
There is a linear-time algorithm that, given any triangulation $G$, finds one of the following: \begin{itemize}
    \item[(a)] A linear number of pairwise non-touching induced D-reducible configurations in $G$, each of which from a given set $\cD$ of size 8202. 
    \item[(b)] A linear number of non-touching obstructing cycles in $G$.
    \end{itemize}
        All vertices in the above-mentioned reducible configurations or obstructing cycles have degree at most 12 in $G$. Moreover, for the reducible configurations in $\cD$, the maximal ring size is
        18 and the maximal reducibility level is 25.
\end{theorem}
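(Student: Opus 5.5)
The statement is Theorem~\ref{thm:main-linear}, quoted from \cite{nearlinear}, so strictly we would simply invoke that paper. To sketch how one would prove it, I would use a discharging argument in the style of Robertson, Sanders, Seymour and Thomas \cite{RSST}, made quantitative and local.

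\textbf{Charges and the counting step.} Give each vertex $v$ the charge $10(6-d_G(v))$. By Euler's formula the total charge is $120$. One cannot just say that some vertex keeps positive charge; we need $\Omega(n)$ such witnesses. So first reduce to bounded degree. Since the average degree is below $6$, a constant fraction of the vertices, say at least $n/c$, have degree at most $12$. Call these \emph{low} vertices.

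\textbf{The local unavoidability lemma.} Around each low vertex $v$ look at the ball of constant radius $r$. The key lemma says that one of the following holds:
\begin{itemize}
\item[(i)] the ball contains an induced configuration from $\cD$, with all its vertices of degree at most $12$;
\item[(ii)] the ball contains an obstructing cycle;
\item[(iii)] $v$ is \emph{discharging-negative}: after applying the finite rule set locally, the final charge of $v$, counted together with the charge it exports, is at most $-\epsilon$.
\end{itemize}
This is the localized version of the RSST unavoidability proof, using the D-reducible set of \cite{steinberger2010unavoidable}, extended to 8202 configurations so that high-degree vertices are handled. Discharging rules only move charge a bounded distance, so summing gives
\[
120 \le \sum_{\text{witnesses}} O(1) \;-\; \epsilon\cdot\#\{\text{negative vertices}\} \;+\; \text{(nonpositive terms)}.
\]
Hence if fewer than $\delta n$ low vertices see a configuration or an obstructing cycle, the negative contribution of the remaining $\Omega(n)$ low vertices forces a contradiction. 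I expect this step to be the main obstacle. It requires a computer-verified check that every degree-bounded neighbourhood with no obstructing cycle either contains a member of $\cD$ or loses charge. It also requires choosing $\cD$ so that induced containment, rather than mere containment, is guaranteed. For this we would argue that a non-induced copy yields a short separating cycle, which is obstructing unless it is trivial.

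\textbf{Extracting non-touching objects and running time.} Each found object has constant size and lies within distance $r$ of its low witness vertex. Each object is hit by only $O(1)$ witnesses, so there are $\Omega(n)$ distinct objects. Build the conflict graph in which two objects are adjacent if they touch. Its degree is bounded because all involved vertices have degree at most $12$ and the objects have constant size. A greedy maximal independent set therefore retains a constant fraction of the objects.

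Either configurations or obstructing cycles form at least half of the objects, which yields (a) or (b). Every step inspects constant-size balls around $O(n)$ vertices of degree at most $12$, so matching against the finite set $\cD$ takes linear time. The stated maximal ring size of $18$ and maximal level of $25$ are finite properties of $\cD$, checked by computing the extendibility levels of all ring colorings.
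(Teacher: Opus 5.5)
The paper does not prove this theorem; it imports it from \cite{nearlinear}. Your sketch therefore has to stand on its own, and its central lemma fails. The closing count is fine \emph{given} that every low vertex satisfies (i), (ii) or (iii). You also correctly note that a positive total charge yields only one witness. But your fix (iii) is not a localized RSST argument. RSST-style discharging gives only ``no reducible configuration nearby $\Rightarrow$ final charge $\le 0$''.

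Upgrading $\le 0$ to $\le -\epsilon$ is impossible at degree-$6$ vertices in flat regions. Such a vertex starts with charge $0$. If a large enough ball around it is a piece of the triangular lattice, every bounded-radius rule acts translation-invariantly there, so the vertex sends exactly what it receives and ends at $0$. Read literally, (iii) is even worse: final plus exported charge equals initial plus received charge, which is $\ge 0$ whenever $d(v)\le 6$.

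Such a vertex also fails (ii), since a flat patch contains no obstructing cycle. So (i) must hold, which means $\cD$ must contain D-reducible configurations all of whose vertices have degree $6$. The theorem itself forces this, not just your formulation. A large icosahedral geodesic triangulation has $12$ vertices of degree $5$ and all others of degree $6$. It should have no obstructing cycles once the degree-$5$ vertices are far apart; I have only sketched that claim. With $\delta\equiv 6$ the ring has size $t+6$, where $t$ is the length of the outer walk, so the bound $18$ is consistent with, e.g., the $19$-vertex hexagon with $t=12$.

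Your plan provides no such configurations. Enlarging Steinberger's set ``so that high-degree vertices are handled'' misdiagnoses the obstruction. The missing idea is to exhibit and verify D-reducible configurations that occur in every sufficiently large $6$-regular patch. Only then can one design a discharging in which every vertex not near such a configuration or an obstructing cycle is strictly negative.

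There are also smaller gaps. A ball of constant radius around a low vertex may contain vertices of unbounded degree. Then neither its size nor the number of witnesses charged to one object is $O(1)$, and the linear-time claim inherits this. The neighbourhoods must be restricted to bounded-degree vertices. Nothing makes the obstructing cycles in (ii) consist of vertices of degree at most $12$, which the statement demands and your bounded-degree conflict graph uses. Finally, a chord of a non-induced copy of $Z$ closes a cycle of length one plus a distance in $Z$, which can exceed $5$. Inducedness therefore needs properties of the specific configurations rather than a generic short-cycle argument.
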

Note that prior to \cite{nearlinear} there was not even an existential proof that a triangulation had more than one reducible configuration or obstructing cycle (which does suffice for 4CT).

\paragraph{An $O(n\log n)$ recursion}
Based on Theorem \ref{thm:main-linear}, \cite{nearlinear} derived an $O(n\log n)$ time 4-coloring algorithm for a planar triangulation $G$ of order $n$. For this, it suffices that we can construct in $O(n\log n)$ time a planar triangulation $H$ that is smaller by a constant factor, color $H$ recursively, and use this coloring to color $G$ in $O(n\log n)$ additional time.

In case (b) of Theorem \ref{thm:main-linear}, where we get a linear number of non-touching obstructing cycles, \cite{nearlinear} did the recursive step in linear time. The bottleneck in \cite{nearlinear} is case (a) where the recursive step takes $O(n\log n)$ time. Below we discuss this case.

\paragraph{D-reducible configurations.}
Suppose as in (a) that  we get $\Omega(n)$ non-touching induced D-reducible configurations from $\cD$ in $G$. 
In the quadratic-time algorithm, we would just handle one of them in linear time, but
here we will handle all of them in $O(n\log n)$ time. 
Below, we first show this using randomization. 

More specifically, our first obvious step is to remove all the reducible configurations from the graph. This removes $\Omega(n)$ vertices. Next, using new auxiliary edges, we triangulate all the faces inside their now empty rings. Then we recursively color the resulting reduced triangulation, which has a constant factor fewer vertices.

Having colored the reduced
triangulation recursively, we restore the original triangulation exactly as it was, removing the auxiliary edges and putting back all the reducible configurations. However, we keep the recursive coloring, which is now a 4-coloring of all vertices outside the reducible configurations. The reducible configurations, while uncolored, have all their rings colored. This follows because they are non-touching, implying that the ring of one reducible configuration does not intersect any other reducible configuration.
Kempe chains and changes only apply to the colored vertices.

\paragraph{A randomized approach}

We will now start doing random Kempe changes on the colored vertices. If the ring $R$ of a non-colored configuration $Z$ gets a 0-extendible coloring, then we extend this 4-coloring of $R$ to $Z$, and then $Z$ is no longer among the non-colored configurations.

In general, if the coloring of a ring is  $i$-extendible, then we may have to
do $i$ improving Kempe changes to make it $0$-extendible. We know from Theorem \ref{thm:main-linear} that all colorings of the rings of our reducible configurations are 25-extendible. Each Kempe change may affect most of the graph, so we cannot afford to do this for one configuration at a time. 
However, we claim that random Kempe changes are expected to be improving for a constant fraction of 
the non-colored configurations, and this leads to the following very simple algorithm.

While there is a non-colored configuration, we repeat the following two
steps:
\begin{enumerate}
    \item Make
    a random Kempe change.
    First, we randomly
    decide the color pairing $ab$ and $cd$ that will be used by the Kempe chains, that is, the Kempe chains 
    considered are the $ab$- and $cd$-colored 
    components. These partition all colored vertices. Next, for every $ab$-colored and every $cd$-colored component, we toss a coin to decide whether to swap its colors in the random Kempe change.
    \item For all non-colored components $Z$ whose ring coloring became $0$-extendible, we extend the coloring to $Z$.
\end{enumerate}
We claim that this 
algorithm terminates after $O(\log n)$ iterations with high probability. To prove 
this, we consider an arbitrary non-colored configuration $Z$, and argue that its ring coloring will become $0$-extendible within $O(\log n)$ rounds with high probability. The result then follows by a union bound.

Consider the iterations from
the perspective of any given reducible configuration $Z$ with ring $R$. If $Z$ is non-colored, it wants a sequence of at most 25 improving Kempe changes. As discussed in the quadratic algorithm, an improving Kempe change requires the right color pairing $ab$ and $cd$.
There is a $1/3$ chance that a random choice is right since the pairs are determined just by the pairing of color 1. Now the improvement only depends on the color change (color swap or no-swap) 
of the disjoint $ab$ and $cd$-colored components (Kempe chains) intersecting the ring $R$, of which there are at most
$|R|\leq 18$. A random Kempe change is thus improving for $Z$ with probability at least $\tfrac{1}{3}\cdot 2^{-18}$. We need at most 25 such consecutive improvements for $Z$ for the ring to become $0$-extendible. The probability of this success for $Z$ is at least $p=1/(3\cdot 2^{18})^{25}=\Omega(1)$.

The above fails for $Z$ if we make a Kempe change that is non-improving for $Z$ before $R$ becomes $0$-extendible, but if it fails, we still have a coloring of $R$, and any coloring of $R$ is 25-extendible\footnote{This would not be the case for C-reducible configurations as used in \cite{AppelHaken89,RSST-STOC}.}. If we now do $r\times 25$ iterations, then the probability that we fail to
color $Z$ is at most $(1-p)^r<e^{-pr}$. With
$50p^{-1}\ln n$ iterations, this is at most $1/n^2$. A union bound implies that 
all configurations succeed within $50p^{-1}\ln n$ iterations with probability at least $1-1/n$.

\paragraph{Derandomization using method of conditional expectations}

Using the method of conditional expectations, we derandomize the random choice of Kempe changes above.
The derandomization always makes the greedy choice maximizing the expected number of configurations that become 0-extendible. In particular, it will always get at least one 0-extendible configuration within 25 Kempe changes.
We will now start doing Kempe changes on the colored vertices. If the ring $R$ of a non-colored configuration $Z$ has a 0-extendible coloring, we extend this 4-coloring of $R$ to $Z$, and $Z$ is no longer among the non-colored configurations (strictly speaking, $(Z,d_G)$ is the reducible configuration). 

The main step is to show that we can deterministically find a Kempe change that improves the level of coloring for a constant fraction of the rings.
The key is to prove the following lemma:

\begin{lemma}[\cite{nearlinear}]\label{lem:det-Kempe-change}
    Let $Z_1,\ldots, Z_q$ be some (not necessarily all) non-colored reducible configurations in a triangulation $G$ that is 4-colored except for the remaining non-colored configurations. The colorings of their rings  $R_1,\ldots,R_q$ are not yet $0$-extendible.
    
    We can then deterministically, in linear time, find and execute a Kempe change that improves at least a fraction $p=\frac1{3\cdot 2^{18}}$ of the rings, and also tell which rings got improved.
\end{lemma}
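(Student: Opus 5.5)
We derandomize the random Kempe change from the previous paragraph by the method of conditional expectations. The argument has three parts: fix the color pairing, fix the coin flips for the Kempe chains one at a time, and show that every step runs in linear time in total.

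\textbf{Choosing the pairing.} First compute, for each of the three color pairings $\{a,b\},\{c,d\}$, all $ab$- and $cd$-colored components of the colored graph. This takes linear time by a BFS over colored vertices. For each ring $R_j$, record the at most $|R_j|\le 18$ Kempe chains meeting $R_j$. Since the coloring of $R_j$ is not $0$-extendible, it is $i$-extendible with $i\ge 1$ minimal. Apply the definition of $i$-extendibility with $G'$ taken to be $G$ minus the non-colored configurations. This gives a pairing and a subset of the chains meeting $R_j$ whose swap lowers the level. Let $X_j$ be the indicator that a uniformly random pairing together with independent fair coins on all chains improves $R_j$. Then $\Pr[X_j=1]\ge \frac13\cdot 2^{-18}$, so $E[\sum_j X_j]\ge pq$.

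Fix the pairing maximizing the conditional expectation. For each ring and pairing, this conditional expectation is the fraction of the $2^{k_j}$ swap patterns on its $k_j\le 18$ incident chains that improve it. A pattern improves $R_j$ exactly when it yields a ring coloring of lower level, which we decide by table lookup, since the levels of all ring colorings of the constant-size configurations in $\cD$ are precomputed. Evaluating the fraction therefore costs $O(1)$ per ring, so choosing the pairing takes $O(q)$ time.

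\textbf{Fixing the coins.} With the pairing fixed, process the relevant chains (those meeting some ring) in an arbitrary order. For the current chain $K$, fix its coin to whichever value does not decrease the conditional expectation $\sum_j E[X_j\mid\text{fixed coins}]$. Only the rings meeting $K$ have their terms changed. Each $E[X_j\mid\cdot]$ is the improving fraction among the completions of the at most $18$ incident coins and is recomputed in $O(1)$. The total work is therefore proportional to $\sum_j k_j\le 18q$. Chains meeting no ring are irrelevant; we leave them unswapped.

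At the end, all coins are fixed, each $X_j$ is determined, and $\sum_j X_j\ge pq$. Execute the swaps in linear time and report the rings $R_j$ with $X_j=1$.

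\textbf{Main obstacle.} The delicate point is showing that the improvement probability of $R_j$ depends only on the swap pattern of the chains meeting $R_j$, and that whether a pattern improves $R_j$ can be read off locally. The first claim holds because the new ring coloring is a function only of those swaps. The second holds because ``improving'' is defined by the level of the resulting ring coloring, which is independent of the rest of the graph. The existence claim behind $\Pr[X_j=1]\ge \frac13 2^{-18}$ is subtle: the definition of $i$-extendibility must be applied to the actual graph $G'$, in which $R_j$ is facial after the removal of $Z_j$. Distinct rings can share chains, but the expectation bound is linear and does not require independence between rings, so sharing causes no difficulty.
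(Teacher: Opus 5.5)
Your proof is correct and follows essentially the same approach as the paper. Both fix the color pairing first, then decide the chains one at a time by conditional expectations, and both get linear time from the fact that each ring meets at most 18 chains. The one difference is the estimator. The paper commits each ring to a single preferred improving (pairing, swap pattern) and tracks $\sum_{R_i\textnormal{ live}}2^{-u_i}$. You use the exact conditional probability of improvement over all improving patterns, which is equally valid. The paper's single-preferred-pattern form is what its later fast implementation relies on: there a subtype records only the preferred swaps and the count of undecided chains.
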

\begin{proof}
We assume that for every one of our reducible configurations $D \in \cD$ (c.f. Theorem \ref{thm:main-linear}), we store the Kempe level of each coloring of its ring. This is a huge constant amount of information, but we need it to determine if a given Kempe change will improve.

For each of the three color combinations, that is, for $a\in\{2,3,4\}$, we use the color pairs $\{1,a\}$ and $\{2,3,4\}\setminus \{a\}$, and we identify all the bi-colored Kempe chains. Each vertex now knows which Kempe chain it belongs to. 

We now consider the perspective of a given ring $R_i$. Its coloring is only affected by the swap or no-swap of the Kempe chains intersecting it, and there are at most $|R_i|\le18$ such Kempe chains. We now consider all combinations of swap and no-swap for each
intersecting Kempe chain. We know which vertices are involved, hence the effect on the coloring, so we can 
check if a Kempe change following a given combination of intersecting swap/no-swap is improving.

We repeat this for all three color combinations, and for each ring we know at least one color combination and a corresponding intersecting swap/no-swap combination improve the coloring; we pick one such combination for each ring $R_i$.

For the next Kempe change, if we picked both the color combination and the swap/no-swap for each Kempe chain uniformly at random, then the probability that the change was improving for $R_i$ would be at
least $1/(3\cdot 2^{|R_i|})\geq p=1/(3\cdot 2^{18})$, and hence we would expect a fraction $p$ of all rings to get improved. 

We can get an improvement for at least a fraction $p$ of the rings using the method of conditional expectation as described below.

At the beginning, all our rings are ``live". For each ring, we know the color pair combination for the improving Kempe change, and we pick the color pair that works for most rings. Only the rings that get their color combination stay alive; the rest will ``die". We know that at least 1/3 stay alive.

We now go through each Kempe chain using the selected color pairs and decide whether to swap the colors. A ring $R_i$ only stays live as long as all decided swaps were as it wanted them. If the number of undecided Kempe chains intersecting $R_i$ is $u_i$, and if all remaining choices were made at random, then the probability that $R_i$ would survive till the end is $2^{-u_i}$. Thus, conditioned on decisions made, the expected number of survivors is
\begin{equation}\label{eq:cond-exp}
    \sum_{R_i\textnormal{ is still live}}2^{-u_i}, \textnormal{ where $u_i$ is
the number of undecided chains intersecting } R_i.
\end{equation}
We can now pick any undecided Kempe chain $X$. We know that one of the two choices — swap or no-swap — will not decrease the conditional expectation, and we can determine which one it is by considering only the live rings that intersect $X$. We make a non-decreasing choice for one Kempe chain at a time. Since each ring only intersects a constant number of Kempe chains (at most 18), the total time over all Kempe chains is linear.

Since the conditional expectation is non-decreasing as we make our choices, and since we end up with all Kempe chains decided, we conclude that the final set of live rings is at least as big as the original expectation.
\end{proof}

We will now use the above lemma to decide on 25 Kempe changes. We say a ring $R_i$ is \emph{active} if all previous Kempe changes have been improving, and it is not yet 0-extendible. We apply Lemma \ref{lem:det-Kempe-change} to all active rings. We know that a fraction $p$ gets an improving move. For the rings $R_i$ whose coloring becomes 0-extendible, we extend the coloring to $Z_i$. The active rings that got an improving move but are still not 0-extendible remain active. We repeatedly apply Lemma \ref{lem:det-Kempe-change} to the active rings, finding a new Kempe change that improves by a factor of $p$ for the remaining active rings. After 25 repetitions, we know that there are no active rings left, and that at least a fraction $p^{25}$ of the original rings have been extended. Conversely, this means that the number of empty rings 
has been reduced by at least a factor $1-p^{25}$.

We now restart with all the empty rings, all of which are 25-extendible. If we started with $r$ reducible configurations, then after at most $\log_{1/(1-p^{25})} r = O(\log r) = O(\log n)$ rounds of 25 Kempe changes, we know that no empty rings remain, hence that we have colored the whole original configuration $G$.

The recursive step, reducing the problem size by a constant factor, is thus completed in $O(n\log n)$ time. This suffices for the  $O(n\log n)$ time 4-coloring in \cite{nearlinear}.

\paragraph{More efficient algorithm}
To clarify what is needed for a more efficient 4-coloring algorithm, \cite{nearlinear} stated that it reduces to the following task:
\begin{task}[Color Extension] \label{task}
Let $G$ be a plane triangulation of order $n$ with a family $\cZ$ of $\Omega(n)$ pairwise non-touching induced D-reducible configurations in the family $\cD$ from \cite{nearlinear}.

Find a fast algorithm that selects a subfamily $\cZ'\subseteq\cZ$
of size $\Omega(n)$ such that, for any 4-coloring of $G\setminus \bigcup\cZ'$, we can (again by the fast algorithm) perform $O(\log n)$ Kempe changes that result in a 4-coloring of $G\setminus \bigcup\cZ'$ which extends to all of $G$.
\end{task}

As described above, using Lemma \ref{lem:det-Kempe-change} recursively, \cite{nearlinear} solves this task with $\cZ'=\cZ$ in $O(n\log n)$ time. Moreover, \cite{nearlinear} proves:

\begin{theorem}[\cite{nearlinear}]\label{thm:recurrence}
If we can solve the Color Extension
Task \ref{task} of order $n$ in time $T(n)$, then we can 4-color any $n$-vertex planar graph in time $T^*(n)\leq O(n)+T(n)+T^*(n/2)$.
\end{theorem}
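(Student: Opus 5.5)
We prove Theorem~\ref{thm:recurrence} by one level of recursion. Given an $n$-vertex planar graph, we first triangulate it in $O(n)$ time by adding edges, so that every face is a triangle; any 4-coloring of the triangulation restricts to one of the original graph. If $n$ is below a fixed constant, we color it by brute force. Otherwise we apply Theorem~\ref{thm:main-linear} in $O(n)$ time and treat its two outcomes separately.

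\textbf{Case (b): obstructing cycles.} Here we follow \cite{nearlinear}, which handles this case in linear time per recursive step. For each of the linearly many non-touching obstructing cycles $C$, we split $G$ along $C$ into the inside and the outside, each completed by $C$ plus a few auxiliary vertices or edges so that it is again a triangulation. The cycles have length at most 5 and are non-touching, so the pieces can be arranged so that each recursive call has at most $n/2$ vertices. After coloring the pieces recursively, we combine them along each $C$. Since $C$ is short, this needs only color permutations and $O(1)$ Kempe changes inside one piece; for length 5 one uses the standard Birkhoff-type argument. This is charged to the $O(n)$ term. I will treat this case as given from \cite{nearlinear}, since the recurrence statement only concerns how $T(n)$ enters.

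\textbf{Case (a): reducible configurations.} We invoke the Color Extension Task~\ref{task} algorithm to select $\cZ'$ with $|\cZ'|\ge cn$. We delete $\bigcup\cZ'$ and triangulate the empty rings with auxiliary edges. This is valid because the configurations are induced and non-touching, so the rings are colored and disjoint from the other configurations. We then repeatedly contract or delete so that the reduced triangulation $H$ has at most $n/2$ vertices. If $cn<n/2$, we cannot remove half the vertices in one step; instead we iterate the reduction a constant number $k$ of times with $(1-c)^k\le 1/2$. Each iteration costs $O(n)+T(n)$ on geometrically shrinking graphs, and the sum is still bounded by $O(n)+T(n)$ after adjusting constants, assuming $T$ is superadditive, as in the linear case we care about. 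We color $H$ recursively in time $T^*(n/2)$ and remove the auxiliary edges. This gives a 4-coloring of $G\setminus\bigcup\cZ'$, which the Task algorithm turns, via Kempe changes, into a coloring that extends to all of $G$.

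\textbf{Main obstacle.} The delicate step is making the recursion shrink by a factor of $2$ rather than by $1-c$ while keeping a single $T(n)$ term. I would handle it as above, by nesting the constant number of reduction rounds and absorbing them into the constants of $O(n)+T(n)$. Alternatively, one can simply state the recurrence as $T^*(n)\le O(n)+T(n)+T^*((1-c)n)$, which solves to the same bound.
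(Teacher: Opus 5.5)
Your case (a) matches the paper's outline: select $\cZ'$, delete $\bigcup\cZ'$, triangulate the empty rings, color the smaller triangulation recursively, restore $G$, and let the Task's Kempe changes finish. Your case (b), however, does not produce the stated recurrence. The recurrence has a single recursive term. Correspondingly, the paper's account of \cite{nearlinear} has case (b) build \emph{one} triangulation $H$ that is smaller by a constant factor, color it with one recursive call, and extend the coloring back to $G$, with construction and extension costing $O(n)$.

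You instead cut along the obstructing cycles and recurse separately on the pieces. This fails in three ways.
\begin{itemize}
\item Every cycle lies in both pieces it separates. With $k=\Omega(n)$ cycles the piece sizes therefore sum to $n+\Theta(n)$, and you get $T^*(n)\le O(n)+\sum_i T^*(n_i)$ with $\sum_i n_i>n$. This does not solve to $O(n)$ even when $T$ is linear; two pieces of size about $n/2$ already give $\Theta(n\log n)$.
\item The claim that the pieces can be arranged to have at most $n/2$ vertices each is unsupported. If, say, each cycle encloses only a constant number of vertices, one piece contains all but $O(k)$ vertices. That can be far more than $n/2$ when the constant in $k=\Omega(n)$ is small.
\item Charging ``$O(1)$ Kempe changes inside one piece'' per cycle to the $O(n)$ term fails when there are $\Omega(n)$ cycles and the piece being recolored can have linear size.
\end{itemize}

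Deferring case (b) to \cite{nearlinear} is legitimate, since the paper does the same. But you must defer the right statement: a linear-time reduction to a single instance of size at most a constant fraction of $n$, with linear-time extension back to $G$.

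This also matters for your step from $(1-c)n$ to $n/2$. Unrolling a constant number of rounds re-applies Theorem~\ref{thm:main-linear} to the reduced graphs, and these may land in case (b). So the unrolling works only if case (b) is such a single-instance reduction. Given that, you do not need superadditivity: with a constant number of rounds and $T$ monotone, the overhead is a constant times $O(n)+T(n)$.

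Your fallback $T^*(n)\le O(n)+T(n)+T^*((1-c)n)$ is what the paper's outline actually delivers, since it speaks of a triangulation ``smaller by a constant factor''. That form suffices for Theorem~\ref{thm:linear-time}.
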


\subsection{Our contribution: linear time 4-coloring}

In this paper, we will prove the following theorem.
\begin{theorem}
\label{thm:task-linear}
    We can solve Task \ref{task} in linear time.
\end{theorem}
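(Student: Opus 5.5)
We will follow the derandomized scheme behind Lemma~\ref{lem:det-Kempe-change}. The only change is in how each Kempe change is found and executed: after an $O(n)$ preprocessing, each one should cost $O(n/\log n)$ amortized time instead of $\Theta(n)$. The rounds already shrink the set of live configurations geometrically, by a factor $1-p^{25}$ per 25 Kempe changes. So the cost of a Kempe change should depend on the number of configurations still alive, not on $n$. The basic idea is to contract each Kempe chain to a single node and work on that contracted structure, never touching the full graph again.

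\textbf{Step 1 (compression).} Fix the current coloring. For each of the three color pairings, compute all bichromatic components in $O(n)$ time, and record for each ring vertex which component it lies in. For one pairing, the ring-relevant information is tiny: each ring meets at most $18$ components. So a Kempe change of that pairing is just a vector of swap bits on the components that meet rings.

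The difficulty is that after a Kempe change with pairing $\{a,b\},\{c,d\}$, the components of the other two pairings ($\{a,c\},\{b,d\}$ and $\{a,d\},\{b,c\}$) are reshuffled. I would handle this with a planar, face-based representation that contracts everything away from the rings. I would first select the subfamily $\cZ'$ so that the rings are far apart; this costs only a constant factor in $|\cZ'|$. Then I would build a ``skeleton'' of size $O(|\cZ'|)$ that captures, for every pairing, which ring vertices share a Kempe chain. I would use planarity here: in a triangulation, the $ab$-chains and $cd$-chains separate each other, so connectivity between ring vertices is governed by a planar structure. I expect this structure can be encoded as a planar graph on the ring vertices plus $O(1)$ contracted ``region'' nodes per ring, and that a Kempe change acts locally on this encoding. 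That is, a Kempe change with one pairing should transform the encoded connectivity for the other pairings by local rules on the contracted regions.

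\textbf{Step 2 (rounds on the skeleton).} With $m$ live configurations, run the argument of Lemma~\ref{lem:det-Kempe-change} on the skeleton in $O(m)$ time per Kempe change. Store the changes lazily: each contracted region keeps a composed permutation of colors, and we never recolor the original vertices during the rounds. The number of live configurations then drops geometrically, so the total skeleton work is $O(\sum_j (1-p^{25})^{j/25} n)=O(n)$. Once the live set falls below $n/\log n$, rebuild the skeleton from the current live set in $O(n)$ time, or rather in time proportional to the current contracted size. With geometric rebuilding, the rebuilds also cost $O(n)$ in total.

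\textbf{Step 3 (final coloring).} At the end, push the lazy permutations down to the actual vertices in $O(n)$ time, and color each $Z\in\cZ'$ from its $0$-extendible ring coloring in $O(1)$ time per configuration. This gives a coloring of $G\setminus\bigcup\cZ'$ obtained by $O(\log n)$ Kempe changes that extends to all of $G$, and the whole procedure runs in $O(n)$ time, which is Theorem~\ref{thm:task-linear}.

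\textbf{Main obstacle.} The hard step is Step 1: maintaining Kempe-chain connectivity across all three pairings under swaps, using a contracted representation whose size is proportional to the number of live configurations and not to $n$. I would first try to prove that contracting each maximal region, of the planar map, containing no ring vertex yields an object whose Kempe components in every pairing can be updated by local rules. If that fails, the fallback is word-RAM bit-packing. Process $\Theta(\log n)$ Kempe changes in batches, and represent each vertex's swap history as a word. This would reduce the amortized cost to $O(n/\log n)$ per Kempe change, as the abstract suggests.
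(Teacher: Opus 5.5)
There is a genuine gap. The whole argument rests on the skeleton of Step 1, which you only conjecture, and the construction you sketch cannot provide it.

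\textbf{Why the skeleton fails.} Once the live set is small, say $m=o(n)$, the ring-free part of the map you propose to contract contains almost all of $G$. Its internal structure is exactly what determines Kempe connectivity after a change. Take a change with pairing $\{a,b\},\{c,d\}$. The new $ac$-subgraph consists of:
\begin{itemize}
\item the old $ac$-edges with both ends on unswapped chains;
\item the old $bd$-edges with both ends on swapped chains;
\item the old $ad$-edges with unswapped $a$-end and swapped $d$-end;
\item the old $bc$-edges with swapped $b$-end and unswapped $c$-end.
\end{itemize}
So the new $ac$- and $bd$-chains are glued from pieces of four old subgraphs, at edges lying anywhere along the swapped chains, typically far from every ring. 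Connectivity between ring vertices in the other two pairings therefore depends on information your contraction has discarded. In general, no ``local rule'' on $O(1)$ nodes per ring can recover it.

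Likewise, one composed color permutation per contracted region cannot represent a Kempe change. Different chains through the same region get different swap bits, and under the next pairing the chain containing a vertex depends on the intermediate coloring. So the lazy push-down of Step 3 is not well defined either. The geometric-sum arithmetic of Step 2 is fine, but it is moot without Step 1.

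\textbf{Why the fallback fails.} The changes are adaptive: the pairing and swap bits of change $t+1$ are chosen by conditional expectation on the coloring produced by change $t$. So you cannot batch $\Theta(\log n)$ of them into a word. Packing over time does not let you compute the bichromatic components of the current coloring in $o(n)$ time.

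\textbf{What the paper does instead.} It packs over space rather than over time, and recomputes everything from the current coloring at every change, in $O(n/\log n)$ time independent of $m$.
\begin{itemize}
\item Once and for all, it fixes a recursive $\vec r$-division whose leaf regions have $\ell=\Theta(\log n)$ vertices. It takes $\cZ'$ to be the configurations strictly inside leaf regions, and encodes each parent/children boundary structure in $O(\sqrt{r_i})$ bits.
\item For each change, it computes $\boundary R^\cK_R$ bottom-up and $\boundary R^\cK_{G-E(R)}$ top-down. It uses table look-ups on levels with $r_i=O(\log^2 n)$ and time linear in the boundary above that.
\item For the derandomization, it summarizes the live configurations touching a batch of $k$ chains by $O(k)$ main types (Lemma \ref{lem:Kempe-config}), each with $O(1)$ subtype counters.
\item For $(\log n)^2/(\log\log n)^2<r_i<(\log n)^2$, the counters no longer fit in a word. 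There it transfers them between segmented arrays using $O((\log\log n)^2)$ segment-pair look-ups per region.
\end{itemize}
With $O(\log n)$ changes at $O(n/\log n)$ each, the total is $O(n)$.
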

By Theorem \ref{thm:recurrence}, we can 4-color any planar graph in linear time, as stated in Theorem \ref{thm:linear-time}.

To solve Task \ref{task} in linear time, we need to implement the $O(\log n)$ Kempe changes in linear total time.
The idea is to spend $O(n)$ time to preprocess the graph, including the selection of an appropriate subfamily $\cZ'$, so that we can identify and simulate each Kempe change in $O(n/\log n)$ amortized time. Thus, we implement all $O(\log n)$ Kempe changes in $O(n)$ total time.

We note that creating such a representation for fast identification and simulation of good Kempe changes is a challenge of data structures, unlike \cite{nearlinear}, which focused on
a new graph theoretic understanding of reducible configurations.
Below we list the basic techniques and challenges.

\paragraph{Tabulation} Not surprisingly, our logarithmic speedup comes from tabulating answers to subproblems that can be coded with $\tfrac{1}{2}\log n$ bits, an idea that goes back to the Method of Four Russians \cite{arlazarov1970economical}. These tables are stored in arrays, viewing the codes as integer indices below $\sqrt n$.
It will be important here that a planar graph on $q$ vertices can be coded with $O(q)$ bits \cite{keeler1995short,turan1984succinct}. We will therefore be very careful in showing that various subproblems can be coded with planar graphs. The challenge is to translate the tabulated solutions for logarithmic-size problems into a full logarithmic speed-up for  4-coloring.

We note that tabulation has its hard limits in the 
sense that while space $s$ can tabulate answers 
to $s$ queries, it may be much less useful when the space of queries is larger than $s$.  A first formal proof of this is found in \cite{PT06-pred}, which considers using space 
$s=n \log^{O(1)} n$ to support predecessor queries among $n$ integers from a range $[u]$. If $u=s$, then we can answer queries in constant time with a trivial table. However, if $u=s^{1.1}$ ($1.1 \log_2 s$ bits versus $\log_2 s$ bits for keys in $[u]$), then \cite{PT06-pred} proves that the optimal query time is $\Theta(\log\log n)$ even in the powerful cell-probe model for static data structures with unlimited preprocessing time.

\paragraph{$r$-divisions}
To organize the flow of information, we shall use the recursive $r$-divisions
from \cite{Fre87,KMS13}. Here, an $r$-division of a planar graph divides it into $O(n/r)$ planar subgraphs, each of size at most $r$ and boundary of size $O(\sqrt r)$. From a tabulation perspective, $r=\Theta(\log n)$ is interesting because the whole subgraph can be coded with $\Theta(\log n)$-bits, but $r=\Theta(\log^2 n)$ is also of interest because the boundary is a planar graph that can be coded with
$\Theta(\log n)$-bits.

\paragraph{The challenge}
Our challenge is that we need to code not just the regions and their boundaries. To get a deterministic algorithm, we must collect information on all Kempe chains to determine, for each, whether swapping it maximizes the conditional expectation in \req{eq:cond-exp}. This basically implies finding out how many of the live reducible configurations want each Kempe chain swapped versus not swapped. For a region in an $r$-division, the boundary may intersect $\Theta(\sqrt r)$ Kempe chains, but we also have swap/no-swap $O(\log r)$-bit counters from all live configurations. Even if we are the most optimistic, we need $\Theta(\sqrt r\log r)$ bits of information to cross the boundary. For $r=\log^2 n/\log\log n$, this is $\Theta(\log n\sqrt {\log\log n})$, which is more than we can tabulate, yet we cannot afford to use a combinatorial algorithm since the total size of all these boundaries is $O(n/\sqrt{r})=O(n\sqrt{\log\log n}/\log n)$ and we only have $O(n/\log n)$ time available. 

The main idea to resolve this bottleneck will be to use codes based on appropriate planar graphs to control how tabulation can more efficiently transfer information between arrays of counters. The structural information coded in these planar graphs fits within the tabulation budget even when the counter arrays do not. We divide the counter arrays into smaller segments so that we with a single table look-up can process one source segment and one target segment at a time using the structural information and the previously chosen swap decisions. In the intermediate range of region sizes, the total number of these segment-pair operations is small enough to preserve the required $O(n/\log n)$ time bound. Section 3.1 gives the details.

\paragraph{The Use of Generative AI:}
The authors wrote this paper and used ChatGPT Pro and Gemini Deep Think only to assist with proofreading, including comments for the exposition.

\section{Basics}\label{sec:basics}

As our 4-coloring algorithm in \cite{nearlinear} is described so far, it is combinatorial
and takes $O(n\log n)$ time. Below we describe some of the known building blocks that we will use to get down to linear time.

\paragraph{Coding and tabulation}
Let us recall the basic terminology about coding. A \emph{(binary) code} for a set $\mathcal F$ of objects is a function $c: {\mathcal F}\to \{0,1\}^*$ (all finite binary words). We require that for $F\ne F'$, $c(F)\ne c(F')$. \emph{Encoding} of $F\in \mathcal F$ is the process of creating or using the code of $F$.

We shall use coding and tabulation in the following classic way. If we can code some subproblems using $\tfrac{1}{2}\log n$
bits on each subproblem, then we can use these bits as indices for a table/array with solutions to such subproblems. Assuming that a solution fits in an $O(\log n)$-bit word, we can look it up in constant time. In fact, this is the most important feature we use of the word RAM. 

There are at most $\sqrt n$ indices of $\tfrac{1}{2}\log n$ bits, so we can spend 
$\log^{O(1)} n$ time when we first compute the solution to a subproblem, that is, when we first populate the array with solutions. 
This means that we have polynomial time relative to the size of the subproblem, and that is plenty for the kind of subproblems we shall consider.
We shall use several different such tables, but the total space and construction time for these tables is $\tilde O(\sqrt n)$ where $\tilde O$ hides logarithmic factors.

For our analysis, we will work with some power-of-two parameter 
$\ell=\Theta(\log n)$, and aim to prove that things can be coded using $O(\ell)$ bits.
In the end, this means they can be coded with $C\ell$ bits for some constant $C$. Choosing a power-of-two parameter $\ell$, such that $(\log n)/(4C) \leq \ell \leq (\log n)/(2C)$, implies that
we can tabulate such codes.

One of the important techniques in the coding of information is that if we know an enumeration $1,\ldots,q$ of objects, e.g., vertices, then we can use an array to associate $b$ bits of information with each object using only $bq$ bits in total. 
This contrasts the situation where we have to use $O(\log q)$ bits to explain which object some information should be associated with. Here $b$ could be a constant, e.g., if the information is a color, but we shall use a super constant $b$ if the information is a counter.

Often we shall use an $O(\ell)$ bit code to code several objects each occupying a segment of the code. For these segments, we often want prefix-free coding of objects. A classic and simple way to do this is as follows. If something like the above array is encoded with $b$ bits, then we prefix this with an $O(\log b)$ bit prefix-free coding of $b$, e.g., 
a unary description of $\lceil \log_2 b\rceil$ consisting of $\lceil \log_2 b\rceil-1$
\texttt{0}s followed by a \texttt{1}, and then follow this  by the binary representation of $b$.
With this representation, we can uniquely read the objects stored in the code, one at a time. 

\paragraph{Divisions}
The idea of $r$-divisions stems from \cite{Fre87}, but we shall use the
version from the linear time construction in \cite{KMS13}. Below we review the relevant results from 
\cite{KMS13}, but using terminology more in line with other terminology used in the current paper. We will describe the translation between the definitions in Appendix \ref{app:translate}.

A \emph{region} $R$ of a triangulation $G$ is a connected subgraph
of $G$ with at least one edge. The faces of $R$ are \emph{inner} if they are faces of $G$;
otherwise they are \emph{holes}. A region has only a constant number
of holes\footnote{The number of holes is an unspecified constant from
the construction in \cite{KMS13}.}. 
Note that if a region has exactly one hole, then it is a
near-triangulation where the hole is the outer face. The vertices and edges around a hole are called \emph{boundary vertices and edges}. Their union taken over all holes of $R$ forms the \emph{boundary} $\boundary R\subseteq R$.

A \emph{division} of $G$ is a collection of regions $R_1,\ldots,R_k$
such that every face of $G$ is an inner face in exactly one region $R_i$. Moreover, it is required that if an edge is shared by two or more regions, then it has to be a boundary edge (that is, on a hole) of any region containing it. In fact, the converse is also true, for if an edge $uv$ is on the boundary of $R_i$, then one of its incident triangles $\Delta$ in $G$ is in a hole of $R_i$, and $\Delta$ has to be an inner face of another region $R_j$ which then also contains $uv$.

An \emph{$r$-division} of $G$ is a division with $O(n/r)$ regions such that every region has at most $r$ vertices and at most $O(\sqrt r)$ boundary vertices.

A \emph{recursive division} is a rooted tree of regions such that the leaves form
a division of $G$ and the region of any node is the union of its
descending leaves. It follows that the root region is $G$ itself, and that any maximal anti-chain of regions in the recursive subdivision forms a division. We will also say that
the children regions $R_1,\ldots,R_d$ of a region $R$ form a division of
$R$ but their boundaries are still defined via $G$. A face of $R_j$ is inner only if it is an inner face of $R$.
If $h$ is the depth of the tree, the vertices at depth $h-i$ ($0\le i\le h$) are said to be on \emph{level} $i$ in the tree.

Given an exponentially increasing sequence $\vec r=(r_0,\ldots,r_h)$ with $r_h\ge |G|$,
a \emph{recursive $\vec r$-division} of $G$ is a
recursive division where all leaves are at level 0 and where the
regions at level $i$ form an $r_i$-division of $G$.
Moreover, regions on level $i$ have $O(r_i/r_{i-1})$ children on level $i-1$.

\begin{theorem}[\cite{KMS13}] 
\label{thm:8}
Let $G$ be a planar graph. If $r_0\ge s$, where $s$ is a sufficiently large constant, and $(r_0,\ldots,r_h)$ is an exponentially increasing sequence with $r_h\ge |G|$, then 
we can construct a recursive $\vec r$-division of $G$ in linear time. 
\end{theorem}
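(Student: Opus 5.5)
The plan is not to reprove the construction of \cite{KMS13}. Instead I would invoke their linear-time algorithm for recursive $\vec r$-divisions and then check that its output meets the definitions used here: regions, holes, boundaries, divisions and recursive $\vec r$-divisions. The construction itself is imported; what needs argument is the translation of definitions, which is what Appendix~\ref{app:translate} is for.

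\textbf{Step 1: normalize the input.} Since $G$ may be taken to be a triangulation, the input is already in the triangulated form \cite{KMS13} works with. If their statement requires bounded degree, I would apply the standard ternarization first: replace each vertex $v$ of degree $d>3$ by a small cycle (or triangulated disc) of $d$ new vertices, one attached to each incident edge. This produces a planar graph of linear size and degree at most $3$. I would then run their algorithm with parameters $r_i$ rescaled by a constant factor. This rescaling is possible because $\sum_v d(v)=O(n)$, so sizes change only by constant factors.

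\textbf{Step 2: translate the output and check the division properties.} In \cite{KMS13} a region is a set of edges (equivalently, of faces) whose boundary vertices are the vertices incident to edges or faces outside the region, and each region has a constant number of holes. I would define our region $R_i$ as the subgraph spanned by the faces of $G$ assigned to piece $i$; projecting back from the ternarized graph, a face is assigned to the piece containing its corresponding face. Then:
\begin{itemize}
\item Every face of $G$ is an inner face of exactly one region, because their pieces partition the faces.
\item An edge in two regions lies on a face outside each of them, so it is a boundary edge of both.
\item Connectivity of regions and the constant bound on holes carry over directly, since contracting the small cycles preserves connected face-sets and their complementary components.
\item Vertex counts and boundary counts can only shrink under the projection, up to the constant factor from Step~1. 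So level-$i$ regions have at most $r_i$ vertices and $O(\sqrt{r_i})$ boundary vertices after adjusting constants.
\end{itemize}

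\textbf{Step 3: recursive structure and running time.} The tree structure also transfers directly: each level-$i$ region is the union of its $O(r_i/r_{i-1})$ children, the leaves are at level $0$, and any antichain is a division because unions of face-partitions are face-partitions. The requirement $r_0\ge s$ matches the constant lower bound needed in \cite{KMS13}. For the running time, ternarization, the algorithm of \cite{KMS13}, and the projection are each linear, since every face and vertex is touched $O(1)$ times per level and the total size summed over all levels is $O(n)$ by the geometric growth of $\vec r$.

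\textbf{Main obstacle.} The hardest part is Step~2: reconciling the edge- and face-based notion of region and hole in \cite{KMS13} with the face-based one here. In particular, I must make sure that the boundary vertices of a region, defined via holes in $G$ rather than via the parent region, are still $O(\sqrt{r_i})$ in number. I would handle this by observing that a vertex is a boundary vertex in our sense exactly when it is incident to faces of two different pieces, which is precisely their definition of boundary vertex.
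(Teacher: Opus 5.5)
\textbf{The approach matches the paper's, but the proposal has a genuine gap.} You import \cite{KMS13} and translate definitions, as the paper does. However, in Step~3 you say the bound of $O(r_i/r_{i-1})$ children per level-$i$ region "transfers directly". That bound is not part of \cite{KMS13}'s output. It is the main thing the paper has to prove, and it is what gives $d=O(1)$ in every boundary structure.

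Knowing only that the children are level-$(i-1)$ regions partitioning the faces of a region with at most $r_i$ vertices gives just $O(r_i)$ children. You also need to show that the children are not mostly tiny. The paper does this in the binary decomposition tree $\mathcal T$ of \cite{KMS13}:
\begin{itemize}
\item Let $S_r(x)$ be the highest descendants of $x$ with $n(y)\le r$. In the subtree of nodes with $n(y)>r$, every single-child path has length at most $24$. This follows from the $0.8$/$0.2$ balance of the separator at depths $0 \bmod 3$.
\item Hence a constant fraction of $S_r(x)$ has size greater than $r/2$.
\item Since $\sum_{z\in S_r(x)}n(z)\le n(x)+b(x)+O(n(x)/\sqrt r)$, this gives $|S_r(x)|=O((n(x)+b(x))/r)$ (Lemma~\ref{lem:S-bound}).
\item Combined with $|S'_r(y)|=O(1+b(y)/\sqrt r)$ from \cite{KMS13}, this yields $O(n(x)/r+b(x)/\sqrt r)=O(r_i/r_{i-1})$ children.
\end{itemize}
The same lemma also repairs the \cite{KMS13} proof of the $O(n/r)$ region count. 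Their argument charges each piece to its parent of size greater than $r$, which fails because these parents can be nested. So that bound cannot simply be cited either.

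\textbf{Regions can become disconnected.} You define a region as the subgraph spanned by its assigned faces, which drops edges with holes on both sides. Such edges arise whenever the separator cycle $C$ runs along an existing hole. In the child on the hole's side, those edges of $C$ border holes on both sides, and they may be the only link between two clusters of faces. Your regions can therefore be disconnected, contradicting the definition of a region. This is the same flaw the paper points out in \cite{KMS13}'s definition of child regions. The paper fixes it by giving each child everything of $R$ lying in the closed disc $\overline{\textnormal{Int}(C)}$ or $\overline{\textnormal{Ext}(C)}$.

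\textbf{Step~1 should be dropped.} $G$ is already a triangulation, which is the input \cite{KMS13} takes. Ternarization would also break Step~2. Contracting a vertex cycle whose central face lies in another piece pinches your piece at that vertex and can split one hole into two. So complementary components are not preserved, and the number of holes need not stay $O(1)$.

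\textbf{The running-time argument is off.} Doing $O(1)$ work per face per level sums to $\Theta(n\log n)$ over the $\Theta(\log n)$ levels. The linear bound has to come from the construction of \cite{KMS13} itself, not from reprocessing every level.
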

Below we choose $s$ to be a sufficiently large power of $2$ and define $r_i=s 2^i$.

For any region $R$, we define an \emph{inner face} of $\boundary R$ to be a face of $\boundary R$ that is not a hole of $R$. 
All of $R\setminus\boundary R$ is thus embedded in the inner faces of $\boundary R$.

\subsection{Coding the recursive $\vec r$-division}
We will now show how to efficiently encode the above recursive subdivision in a way that
will be useful for a linear-time 4-coloring algorithm.

\paragraph{Recursively encoding a planar graph with a linear number of bits}
A plane graph $G$ of order $n$ can be trivially encoded with $O(n\log n)$ bits, using $O(\log n)$-bit labels for the vertices. It is also known \cite{turan1984succinct} that we can get down to $O(n)$ bits with an efficient coding as in \cite{keeler1995short}. 
We will now show that we can use the recursive subdivision $\cR$ of Theorem \ref{thm:8} to derive an
$O(n)$-bit representation of $G$. This encoding is fairly standard, but the way we do it will be useful for understanding later 
encodings. The encoding will also
allow us to encode $O(1)$ bits of extra information with each vertex and edge.

We assume that the recursive subdivision $\cR$ is ordered in the sense that the children of each region appear in some order. The coding will be
generated in a depth-first traversal of $\cR$ visiting children respecting the ordering. The encoding of
a region $R$, including the
recursive coding of its descendants, will form a consecutive segment of the entire code for $G$.

When we
encode a region $R$, we are given labels for 
the boundary vertices in $\boundary R$. We will then
produce a coding of $R$ prefixed by the length of the code for $R$. The encoding should include
the given labels of the boundary vertices so that they can be identified in the encoding.

Consider a level-$i$ region $R\in\cR$. It has
$O(\sqrt{r_i})$ boundary vertices with given $O(\log r_i)$-bit labels. 
The region $R$ has children
$R_1,\ldots,R_d$ where $d=O(r_i/r_{i-1})=O(1)$. Let $B=\bigcup_{j=1}^d\boundary R_j\supseteq \boundary R$ be the graph
consisting of the boundary vertices and edges of these
child regions. Then $B$ is a plane graph of size $O(\sqrt{r_i})$.  We
make a local labeling of the vertices of $B$ using labels
$1,\ldots,|V(B)|$. We want to encode the plane graph $B$ by including 
these local labels plus a mapping from the given
boundary labels in $\boundary R$ to the new labels in $B=\bigcup_{j=1}^d\boundary R_j$. We can easily do this using $O(\sqrt{r_i}\log r_i)$ bits. The encoding for the child region 
$R_j$ is given our new
labels on $B$ for its
boundary vertices in $\boundary R_j\subseteq B$.

When we decode our encoding top-down, we will already have embedded
$\boundary R$ including the given
labeling of its vertices. Since we can identify these vertices in $B$, we can add $B$ to the embedding, placing $B\setminus \boundary R$ in the inner faces of $\boundary R$. This includes our new local labeling of all vertices in $B$, which is used recursively as input for the decoding and embedding of the child regions $R_1,\ldots,R_d$. 

At level $0$, a region has constant size, and we encode it explicitly, respecting its boundary labels, using $O(1)$ bits.

The reason this is efficient is that on level $i$ in the recursive
$\vec r$-division, we only have $O(n/r_i)$ regions, so the space
used to get from level $i$ to level $i-1$ is
$O((n/r_i)\sqrt{r_i}\log r_i)=O((n (\log r_i)/\sqrt{r_i})$ bits.
The total space over all levels is thus $O(n)$ bits. 
The encoding takes linear
time given the recursive division which is constructed in $O(n)$ time in \cite{KMS13}. 

Our encoding describes both $G$ and the ordered recursive division $\cR$. From the ordering, we get an ordering of the edges; namely, the order of first appearance of edges in the terminal region encodings. 
We can also use this to order the vertices based on their first occurrence in the edge ordering. These orderings imply enumerations of the vertices and edges, so we can use arrays to associate a constant amount of information with each vertex and edge, using only $O(n)$ bits in total.

\paragraph{More efficient encoding of boundaries}
We will now improve the coding of boundaries. A
region $R$ on level $i$ has
$O(\sqrt{r_i})$ boundary vertices, including
the boundary vertices of its children. Above we encoded
it with $O(\sqrt{r_i}\log r_i)$ bits using $O(\log r_i)$ bit labels for the
boundary vertices. The main reason for the labels is 
that in order to decode the
code, we need to know where each externally labeled boundary vertex from $\boundary R$ is placed in the graph representing $R$. We will now get down to $O(\sqrt{r_i})$ bits 
exploiting that a region $R$ has only a constant number of holes.

Suppose we have a plane graph $W$ whose edges
are the edges on the boundary of a hole plus an oriented start edge $uv$, oriented following the 
clockwise walk around the hole. We will then follow
the clockwise walk around the hole if we always turn right, that is, edge $xy$ is followed by edge $yz$ 
where $z$ is the neighbor of $y$ in $W$ that follows $x$ anti-clockwise. An important
point in the description of the above
boundary walk is that given
the oriented edge $uv$, we do not need any further description of the hole, since the walk defines the hole. 

Now consider a region $R$ with
$h=O(1)$ holes with boundaries
$W_1,\ldots,W_h$. Then
$\boundary R=\bigcup_{k=1}^h W_k$. To represent the above
hole walks, we just need to
store for each edge in $\boundary R$ which boundary $W_k$ it belongs to, and whether it is the start edge $u_kv_k$
and, if so, with what orientation. We call this information the \emph{hole information} associated with
each edge of $\boundary R$, and note that it takes only a constant number of bits per edge. We will have
such hole information in mind for the boundary of every region $R$ in our recursive subdivision.
With the above information, we also have a unique enumeration of the vertices and edges in $\boundary R$: we do the walks of
$W_1,\ldots,W_h$, one at a time, and enumerate the vertices and edges as they are encountered for the first time in these walks.

Recall that our recursive issue is to identify the vertices in the boundary of a region among the boundaries of its children. We are given a region $R_0$ with $O(1)$ children $R_1,\ldots, R_d$, and our goal is to encode all their boundaries in such a way that we can identify boundary vertices of the parent region $R_0$ among the boundary vertices of the children $R_1,\ldots, R_d$. 
We shall refer to this as the \emph{boundary
structure} $\boundary (R_0;R_1,\ldots,R_d)$. 

First we encode the plane graph $B=\bigcup_{j=0}^d \boundary R_j$, which we note includes $\boundary R_0$. 
We want to apply the previous encoding of any 2-connected triangulation in linear
time using a linear number
of bits. To apply this to 
$B$ we add a linear number of auxiliary edges, so as to
create a 2-connected triangulation $B^\Delta$. We
now get the encoding of $B^\Delta$ including our ordered recursive
division, which also defines a linear
order of the edges; namely, the first time they are encountered in terminal-region encodings in the
ordered depth-first traversal. 
Using the linear order of the
edges, we can use an array to
store a constant number of bits with each edge.

For each edge in $B^\Delta$, we store if it is an auxiliary edge. If not, for $j=0,...,d$, we store if it belongs to $\boundary R_j$ and if so, its hole information in $\boundary R_j$. All this adds up to a constant number of bits per edge, as desired. We can thus store the whole boundary structure of a region on level $i$ using only $O(\sqrt{r_i})$ bits.

To clarify the parent-child relation, note that the boundary of each region $R$ will appear twice
in the above boundary structures, once as a child, and once as a parent.
In each of these boundary structures, we know the edges of $\boundary R$, including the associated hole information, and this provides a complete map between the hole walks of $R$ in the two boundary structures.

Note that the above does not give us a more efficient encoding of the planar graphs, but it does give us a tight encoding of the relation between a parent $R_0$ and its children $R_1,\ldots, R_d$. We shall sometimes use this boundary structure coding when $\bigcup_{j=0}^d \boundary R_j$ is contained inside a larger planar graph $H$.

\begin{lemma}
For a region $R_0$ with children $R_1,\ldots,R_d$, suppose we have a plane graph $H$ containing $\bigcup_{j=0}^d\partial R_j$. Then we can encode $H$ using $O(|H|)$ bits so that, for each $j=0,\ldots,d$, we can identify the hole walks of $R_j$ in $H$.
\end{lemma}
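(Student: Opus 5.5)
The plan is to repeat the boundary-structure encoding just described, with $H$ taking the place of $B=\bigcup_{j=0}^d\boundary R_j$.

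\emph{Triangulate.} First add auxiliary edges (and, if $H$ is disconnected or not 2-connected, auxiliary vertices joining its components) to obtain a 2-connected plane triangulation $H^\Delta$ containing $H$ as a plane subgraph. The aim is $|H^\Delta|=O(|H|)$.

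Standard facts give this. Connect the components through a face. Then triangulate each face, inserting a new vertex inside any face whose boundary walk repeats vertices, so that no multi-edges arise. Since $H^\Delta$ is planar, it has $O(|H|)$ vertices and edges.

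\emph{Encode the triangulation.} Apply Theorem \ref{thm:8} to $H^\Delta$ and use the recursive $O(n)$-bit encoding of a triangulation given above. This yields an $O(|H|)$-bit code of $H^\Delta$. It also fixes a canonical linear order of its edges, namely the order of first appearance in the terminal-region encodings. Decoding recovers the same order, so we can attach a constant number of bits to each edge through an array indexed by this order.

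\emph{Attach per-edge information.} For every edge, store the following:
\begin{itemize}
\item whether it is auxiliary, i.e.\ not in $H$, and whether each endpoint is auxiliary;
\item for each $j=0,\ldots,d$, whether it belongs to $\boundary R_j$;
\item if so, its hole information in $\boundary R_j$: which hole walk $W_k$ it lies on, and whether it is the start edge $u_kv_k$ together with its orientation.
\end{itemize}
Since $d=O(1)$ and each region has $O(1)$ holes, this is $O(1)$ bits per edge. The total is $O(|H|)$ bits, and the code is prefixed by its length.

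\emph{Decoding.} Decode $H^\Delta$, delete the auxiliary elements to recover $H$, and restrict to the edges marked as belonging to $\boundary R_j$. From the marked start edge of each hole, perform the right-turn walk within the subgraph $\boundary R_j$.

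The main obstacle is to justify that this walk, which is computed inside the sub-embedding of $\boundary R_j$ inherited from $H$, traces exactly the hole of $R_j$. It suffices that the rotation system of $\boundary R_j$ as a plane subgraph of $H$ coincides with its rotation system in $G$. This holds because $H$ is a plane graph containing $\bigcup_j\boundary R_j$ with its embedding inherited from $G$. Rotations restricted to a subgraph are determined by the ambient rotations, so the facial walks of the subgraph are unchanged.

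One caveat remains: a hole walk may revisit vertices, so the walk is determined by edges rather than vertices, which is why the information is stored per edge. With this point settled, the hole walks of every $R_j$ are identified in $H$, as the statement requires.
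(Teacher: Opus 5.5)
Your proposal is correct and follows essentially the same route as the paper: triangulate $H$ into a 2-connected plane triangulation, encode that triangulation with the linear-bit recursive encoding, and use the induced edge order to store $O(1)$ bits per edge (whether it is auxiliary, its membership in each $\boundary R_j$, and its hole information). The paper adds only auxiliary edges, which already suffices because any simple plane graph on at least three vertices extends to a plane triangulation by adding edges, so your auxiliary vertices are harmless but unnecessary; your argument that the right-turn walk in the inherited sub-embedding recovers exactly the hole is a detail the paper leaves implicit.
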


\begin{proof}
Add auxiliary edges to triangulate $H$ and make it 2-connected, and use the linear-bit encoding above. The edge order lets us store a constant number of bits with every edge. These bits say whether the edge is auxiliary, to which $\partial R_j$ it belongs, and the hole information for that $\partial R_j$. From this, after decoding $H$, we recover every $\partial R_j$ together with its enumeration. The total number of bits is $O(|H|)$.
\end{proof}

The point is that below we are going to augment $\bigcup_{j=0}^d\partial R_j$ with certain Kempe nodes and configuration nodes, and then we need to know how these connect to our boundary structure.

\paragraph{Leaf regions of logarithmic size and strictly containing our family $\cZ'$ of reducible configurations.}
We are only going to use the recursive $\vec r$-division down to some $i_0$ where $r_{i_0}=s2^{i_0}= \ell$ where $\ell=\Theta(\log n)$ is
the parameter to be chosen later such that, informally speaking, $O(\ell)<\tfrac{1}{2}\log n$.

An $i_0$ region thus has at most $\ell$
vertices, implying that we
can encode it with $O(\ell)<\tfrac{1}{2}\log n$ bits, including
a coloring of the vertices. There are
at most $\sqrt n$ such representations, so we can afford to make tables over
all these representations, e.g., we can precompute the reducible configurations inside of each such leaf-region in $O(\sqrt n\log n)=o(n)$ time.

Recall from Task \ref{task} that we are given a family $\cZ$ of $\Omega(n)$ non-touching induced D-reducible configurations, and that we have to select a subfamily $\cZ'$ of linear size. We select
\[\cZ'=\{Z\in \cZ\mid Z\textnormal{ is strictly contained in some leaf region $R$, that is, }Z\subseteq R-\boundary R\}.\]
The total number of boundary vertices of leaf regions is $O(n/\sqrt{\log n})$. Since our $\Omega(n)$ reducible configurations are non-touching and each has only a constant number of vertices, we still have $\Omega(n)$ reducible configurations for which no vertex is a boundary vertex of a leaf region. These are the reducible configurations $Z$ in $\cZ'$. If $Z$ is in the region $R$, then its ring vertices may 
intersect $\boundary R$, but its completion $\hat Z$, that is $Z$ completed with its ring, is still contained in $R$.

\paragraph{Small regions with logarithmic boundaries}
We are going to use tabulation, not just for the leaf regions, but for the boundary structure of regions
up to the level $i_1$ with $r_{i_1}=\Theta(\ell^2)$. 
As described above, we can encode these boundary
structures using $O(\sqrt {r_{i_1}})=O(\ell)$ bits. 

All the above encoding is done as a preprocessing, once and for all, in linear time.
We shall use it to identify and perform Kempe changes in $O(n/\log n)$ time.

\subsection{Identifying Kempe chains}\label{sec:ident-Kemp}
We have a graph that is 4-colored except for some uncolored reducible configurations. Since we have an enumeration of the vertices, the colors can be stored in an array with $3$ bits per vertex (we use a third bit for ``uncolored").

We now want to identify all Kempe chains in $G$ based on given opposing color
pairs $\{a,b\}$ and $\{c,d\}$ with $\{a,b,c,d\}=\{1,2,3,4\}$. We will only
spend $O(n/\log n)$ total time. Our first goal is to find out how
Kempe chains connect the vertices in the rings of the uncolored reducible
configurations.

\paragraph{Accounting}
We will get down to $O(n/\log n)$ time by spending constant time on regions at levels $i=i_0,\ldots,i_1$, that is, starting from the leaf regions. We have $O(n/\log n)$ such regions.

For $i>i_1$, the boundary size is $O(\sqrt{r_i})$, and here we will spend $O(\sqrt{r_i})$ time. For the
whole level, this is $O(\sqrt{r_i}\, n/r_i)=O(n/\sqrt{r_i})$, so
level $i=i_1+1$ dominates\footnote{Strictly speaking, the time spent at level $i_1+1$ can be smaller than the time at some of the previous levels, but the upper bound used in our estimates dominates the upper bounds used on lower levels.}
with a cost of $O(n/\log n)$.

\paragraph{Kempe connectivity via Kempe edges}
From now on, we fix complementary pairs $ab$ and $cd$ of colors.
The Kempe chains only involve the colored vertices.
Since we have a proper coloring, there are no monochromatic edges, so
the $ab$-Kempe chains are the $ab$-colored components.
With the opposing color pairs given, the 4-colored 
part of the graph is partitioned into
Kempe chains. 
We shall refer to the $ab$- and $cd$-colored edges as \emph{Kempe edges}.

Our goal is to represent which vertices are connected by Kempe chains, and call two vertices \emph{Kempe connected} if they belong to the same Kempe chain. To do this efficiently, we will
manipulate the graph, introducing new \emph{Kempe nodes} and edges, but such
that the Kempe connectivity between original vertices is not changed.
Since we only care about Kempe connectivity,
we allow ourselves to add auxiliary Kempe edges between vertices of the same
color, as long as they were already Kempe connected.

\paragraph{Contracting Kempe edges in faces}
Consider two subgraphs $B$ and $H$ of the same plane graphs, e.g., $B$
could be the boundary of a region $R$ of $G$ (and $H$ could consist of some Kempe chains). We need an operation that \emph{contracts the Kempe chains from $H$ in the faces of
$B$}. The resulting graph, denoted $B^\cK_H$, is a plane graph of size
$O(|B|)$ containing $B$ and such that any two nodes from $B$ are 
Kempe connected in $B\cup H$ if and only if they are Kempe connected
in $B^\cK_H$.
Figure \ref{fig:boundary_kempe} shows an example with $\boundary R^\cK_G$ for some region $R$ in $G$.

We consider one face $\Phi$ of $B$ at a time. First we delete all parts of $H$ in $\Phi$ that are not Kempe connected to $B$.
Next, for each appearance of a vertex $v$ 
on the boundary of $\Phi$, having some incident edges in $H$, we make a copy $v_\Phi$ inside $\Phi$,
attached to $v$ by a new Kempe edge $vv_\Phi$, 
called an ``attachment edge". This edge is attached to the boundary of $\Phi$ at the same appearance of $v$. Moreover,
we move each Kempe edge $vw\in\Phi$ so it becomes a Kempe edge
$v_{\Phi}w$.
Note that if the same vertex $v$ appears multiple times along the boundary of $\Phi$, then we make a separate copy $v'_{\Phi}$ with a separate attachment edge. Also note that if we had an edge $vw$ contained in $\Phi$ and with both end-points in the boundary, then it will become an edge $v_\Phi w_\Phi$.
When we are done, we have moved all original Kempe edges in $\Phi$ so that they are now embedded strictly inside $\Phi$. Finally, we contract all Kempe edges strictly inside $\Phi$, calling the result \emph{Kempe nodes}. The new attachment edges now connect each boundary vertex of $\Phi$
with a Kempe node. The Kempe nodes and the attachment edges now
provide exactly the same Kempe connections between boundary vertices of $\Phi$ (all in $B$)
as did the Kempe edges from $H$ in $\Phi$. Doing this operation
on all faces of $B$, we get the graph $B^\cK_H$.

Finally we argue that $|B^\cK_H|=O(|B|)$. The point is that for each
vertex $v\in B$, we add at most one $v_\Phi$ each time a boundary walk of
$\Phi$ visits $v$. The number of such vertices added in $\Phi$ is at most the total length of the boundary walks of $\Phi$. Each edge bounds at most two
faces, so the number of vertices added is at most twice the number of
edges, which, in turn, is at most thrice the number of vertices in $B$.
Thus we have proved

\begin{lemma}\label{lem:face-contract} Let $B^\cK_H$ be the contraction of the Kempe
edges from $H$ in the faces of $B$. Then $B^\cK_H$
is a plane graph of size $O(|B|)$ such that any two vertices in
$B$ are Kempe connected in $B \cup H$ if and only if they are Kempe
connected in $B^\cK_H$.
\end{lemma}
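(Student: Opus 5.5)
The plan is to check the two claims of Lemma~\ref{lem:face-contract} separately: first the size bound, then the equivalence of Kempe connectivity. In both cases we work one face $\Phi$ of $B$ at a time, following the construction above. Since $B \subseteq B\cup H$ and every part of $H$ lies inside some face of $B$, it suffices to understand what happens within a single face. The contributions from different faces are then glued together along $B$.

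For the size bound, I would argue as in the paragraph preceding the lemma. Inside $\Phi$ we create at most one copy $v_\Phi$ per appearance of $v$ on the boundary walks of $\Phi$. After contraction, each Kempe node contains at least one such copy, so the number of Kempe nodes and attachment edges is at most the total length of the boundary walks of $\Phi$. Summing over faces, each edge of $B$ is traversed at most twice, so the additions number at most $2|E(B)|$. To conclude $|E(B)|\le 3|V(B)|$ from Euler's formula, I would add auxiliary edges to make $B$ simple and connected if necessary; isolated vertices contribute nothing. Planarity of $B^\cK_H$ is preserved because the edge moves and contractions take place strictly inside $\Phi$, and contracting a connected plane subgraph keeps the graph plane.

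For connectivity, I would first note three facts. Deleting the parts of $H$ in $\Phi$ that are not Kempe connected to $B$ cannot change connectivity among vertices of $B$. Subdividing via $v\,v_\Phi$ and rerouting edges $vw$ to $v_\Phi w$ is the same as inserting a vertex of the colour of $v$ joined by a Kempe edge, which preserves Kempe connectivity, as the paper allows. Contracting connected sets of Kempe edges preserves connectivity between the remaining vertices. The one point needing care is that Kempe connectivity is colour-class sensitive: an $ab$-path must not merge with a $cd$-path. To handle this, I would contract the $ab$-edges and the $cd$-edges separately, so that each Kempe node inherits a well-defined colour pair. A Kempe node is then attached only to boundary vertices of its own colour pair. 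In both directions, a Kempe path in $B\cup H$ between vertices of $B$ splits into maximal segments, each lying in $B$ or inside a single face. Each segment inside a face is replaced by the corresponding attachment-edge, Kempe-node, attachment-edge route, and conversely each such route in $B^\cK_H$ expands back to a path in $H$ inside $\Phi$.

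I expect the main obstacle to be the bookkeeping for repeated appearances of a vertex $v$ on a face boundary, where separate copies $v_\Phi, v'_\Phi$ are made. Here I would need to check that each Kempe edge in $\Phi$ is assigned to the copy matching the correct appearance, meaning the corner of $\Phi$ at which the edge leaves $v$. All such copies remain connected through $v$ itself, so connectivity is unaffected whichever copy is chosen; the corner assignment only matters for keeping the embedding plane.
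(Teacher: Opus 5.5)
Your proposal is correct and follows the paper's argument. The size bound comes from charging one copy $v_\Phi$, and hence at most one Kempe node and one attachment edge, to each appearance of $v$ on a boundary walk of a face; the connectivity claim comes from each step of the face-by-face construction preserving Kempe connections between vertices of $B$, which the paper merely asserts and you spell out, though your separate contraction of $ab$- and $cd$-edges is automatic, since no vertex is incident to Kempe edges of both colour pairs.
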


\begin{figure}[htbp]
    \centering
    \begin{minipage}[b]{0.48\columnwidth}
        \centering
        \includegraphics[width=0.98\linewidth]{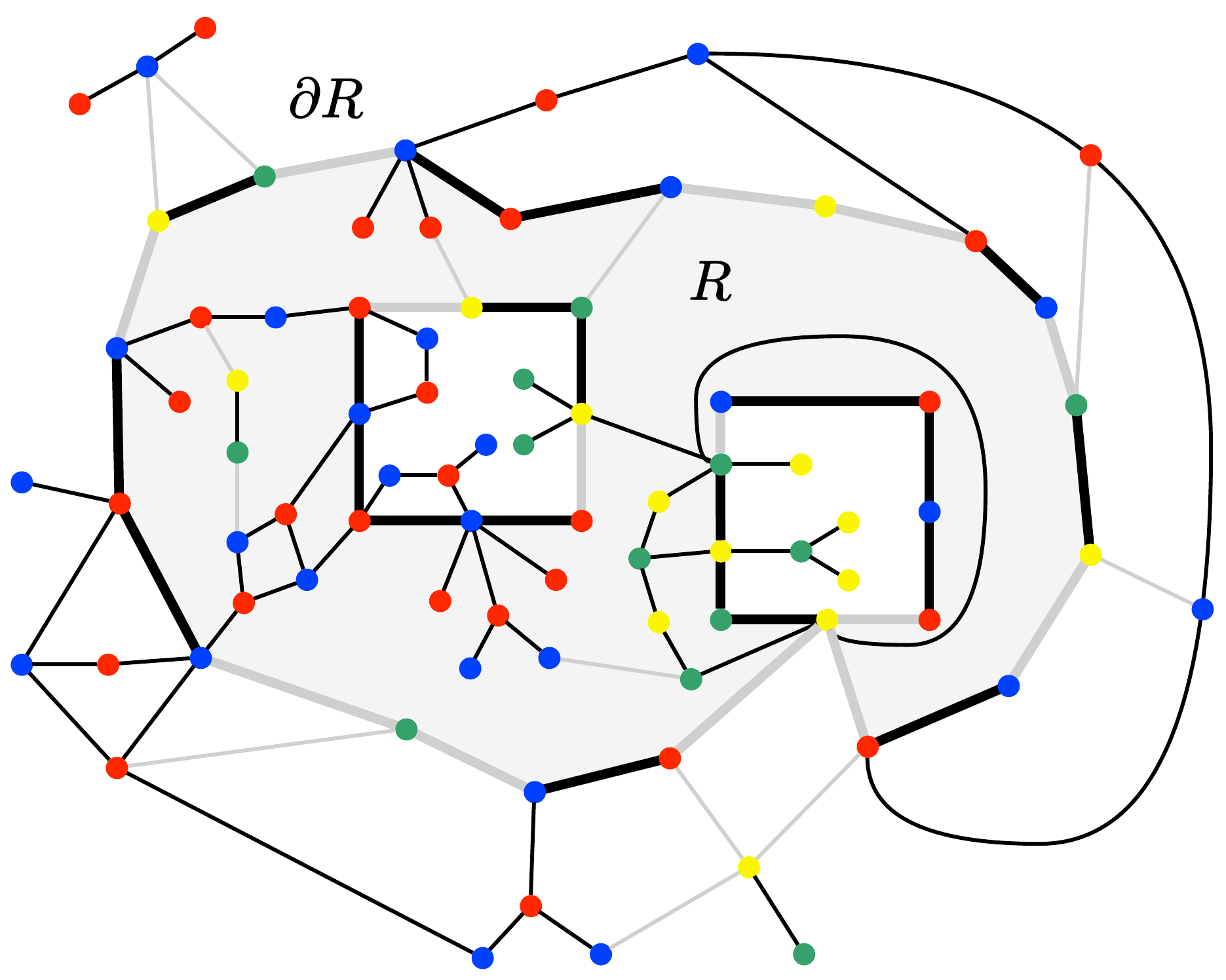}
        \caption{The figure shows  parts of a 4-colored graph $G$ with a region $R$ (light gray background) and its boundary $\boundary R$ (bold edges). 
        The black edges are those in the Kempe chains with the color pairing {\color{red}red}-{\color{blue} blue} and {\color{forestgreen}green}-{\color{bananayellow}yellow}. Other edges are drawn in gray. Note that $G$ is supposed to be triangulated, so the drawing only includes some illustrative parts of $G$. 
Also, the illustration does not include reducible configurations that are not colored (they  do not intersect any boundaries).}
        \label{fig:original_kempe}
    \end{minipage}
    \quad
    \begin{minipage}[b]{0.48\columnwidth}
        \centering
        \includegraphics[width=0.98\linewidth]{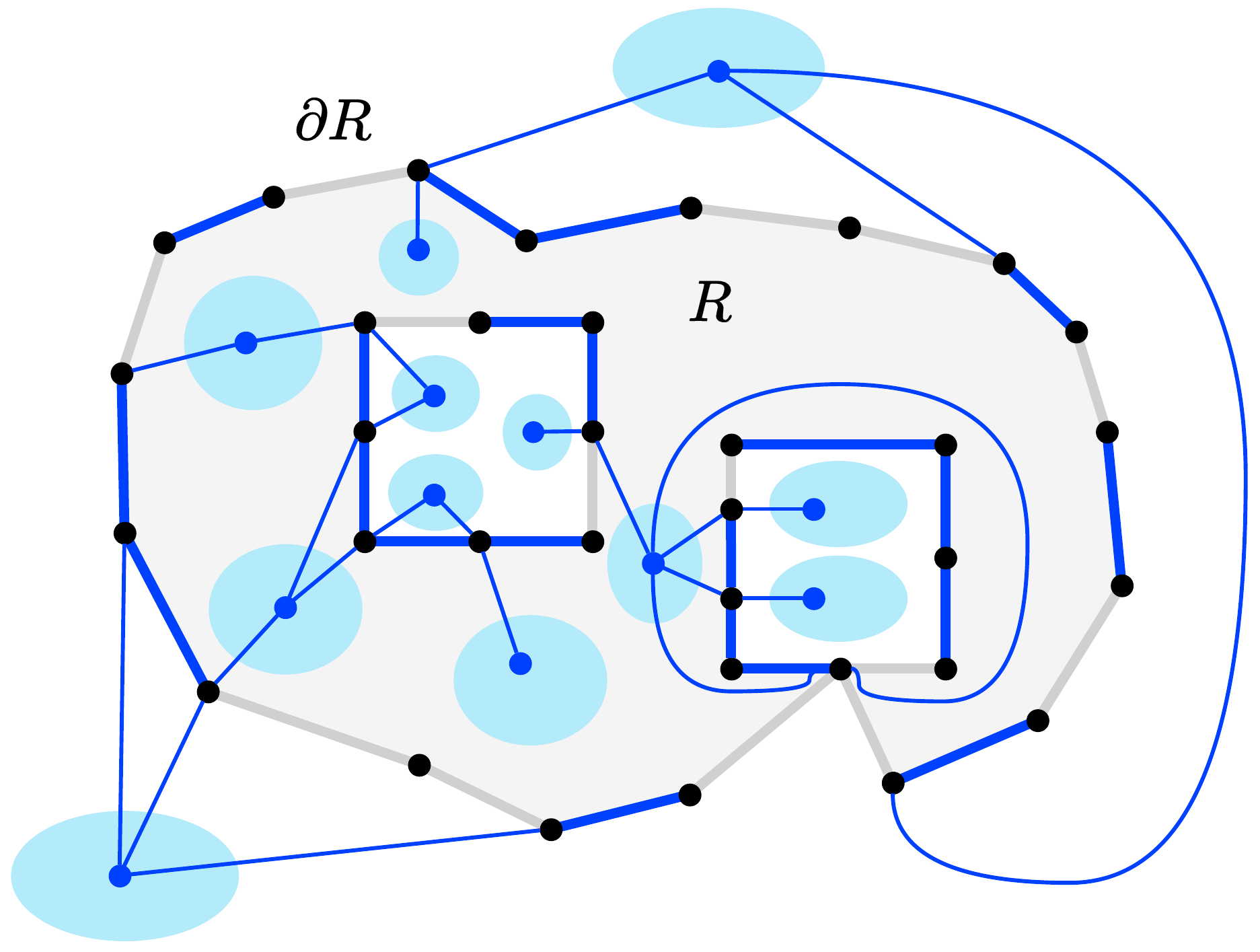}
        \caption{The figure shows the graph ${\boundary R}^\cK_G$ that we get from the Kempe chains in the left figure. The blue nodes are Kempe nodes and the blue edges are Kempe edges. They represent Kempe connectivity between vertices in $\boundary R$.}
        \label{fig:boundary_kempe}
    \end{minipage}
\end{figure}

\begin{figure}[htbp]
    \centering
    \includegraphics[width=0.44\columnwidth]{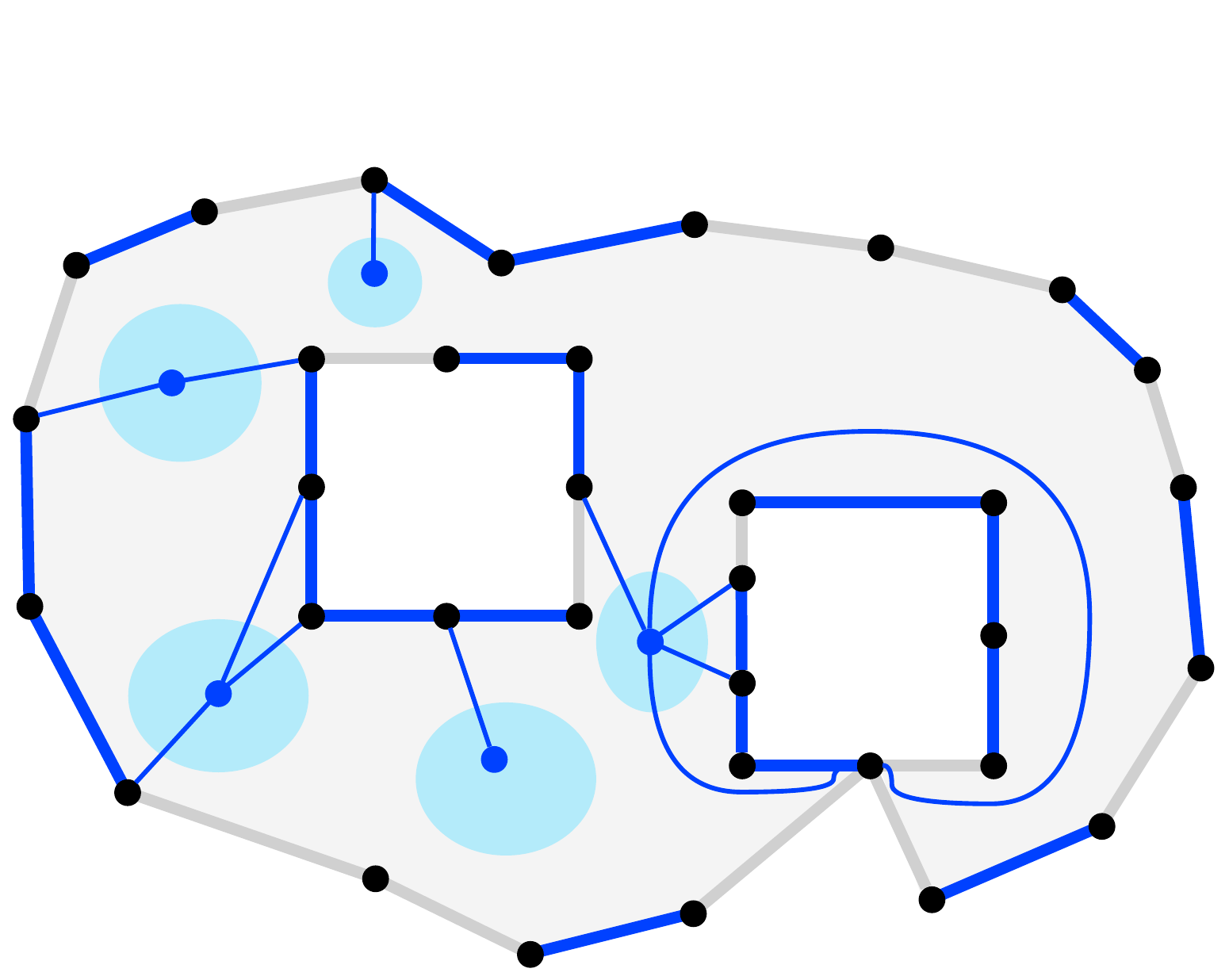}
    \includegraphics[width=0.48\columnwidth]{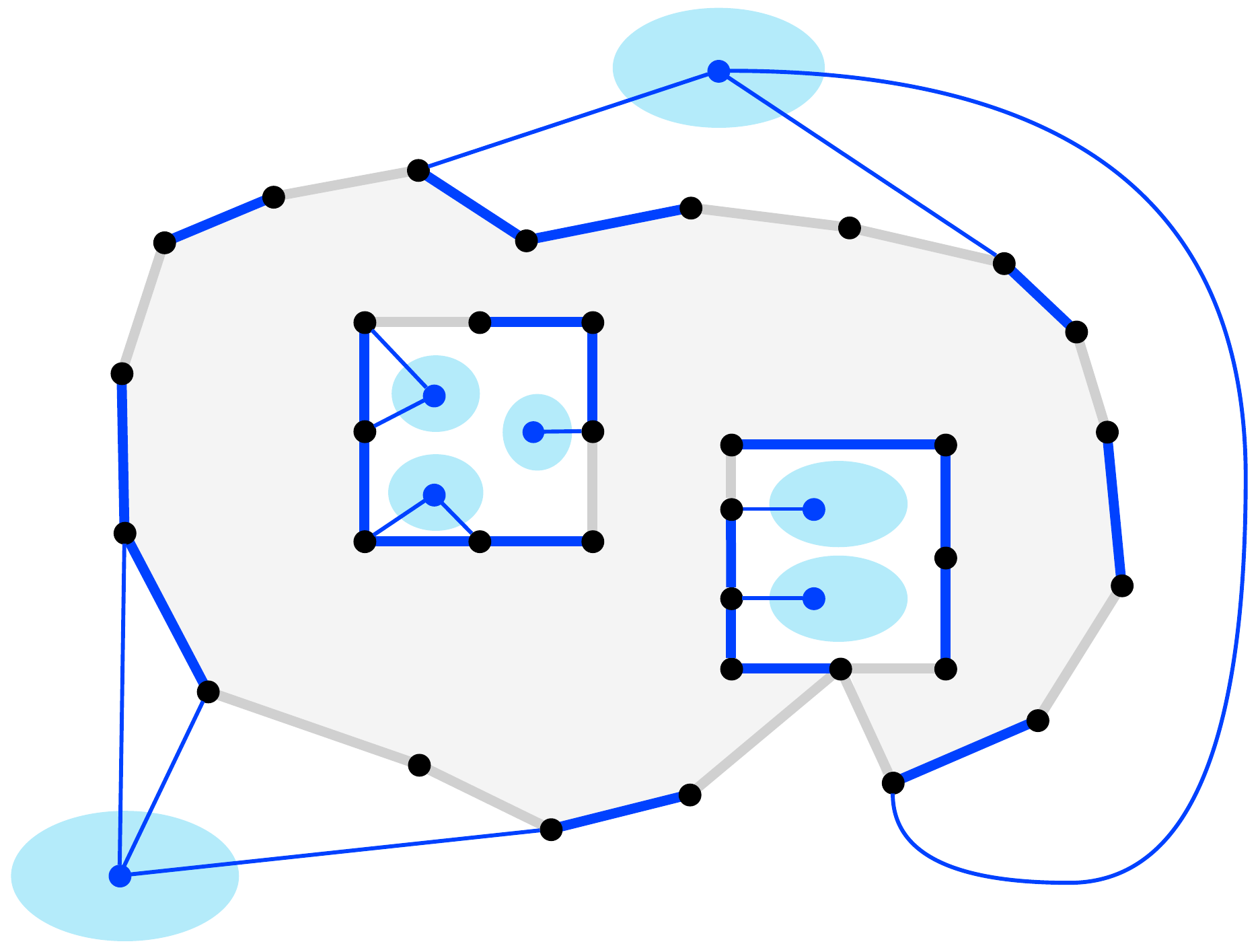}
    \caption{We produce ${\boundary R}^\cK_R$ (left) bottom-up, and then ${\boundary R}^\cK_{G-E(R)}$ (right) top-down, and finally combine them to produce ${\boundary R}^\cK_G$ (Figure \ref{fig:boundary_kempe}).}
    \label{fig:R_kempe,outside_kempe}
\end{figure}
  
\paragraph{Final goal}
For every region $R$, we will compute both ${\boundary  R}^\cK_R$
and ${\boundary  R}^\cK_{G-E(R)}$ (see Figure \ref{fig:boundary_kempe}). Note that ${\boundary  R}^\cK_{G-E(R)}$ excludes all vertices strictly contained in $R$ since they are not (Kempe) connected to $\boundary R$ in ${\boundary  R} \cup (G-E(R))$.  We also note that
${\boundary  R}^\cK_G={\boundary  R}^\cK_R \cup {\boundary  R}^\cK_{G-E(R)}$, but we
compute ${\boundary  R}^\cK_R$
and ${\boundary  R}^\cK_{G-E(R)}$ separately. We compute ${\boundary  R}^\cK_R$
bottom-up, and using them, we will compute the ${\boundary  R}^\cK_{G-E(R)}$
top-down. Note that for a leaf region $R$, we can use $R\cup {\boundary  R}^\cK_{G-E(R)}$ to find out which vertices in $R$ are Kempe connected in $G$. In particular, if we have a reducible configuration $Z\in \cZ'$ 
strictly contained in $R$, then we can see which of its ring vertices are Kempe connected, and since $R\cup {\boundary  R}^\cK_{G-E(R)}$ has size $O(\ell)$, this can be done simultaneously for all such configurations in $R$ with a single table look-up.

\paragraph{Bottom-up processing} We now first do a bottom-up processing
of our recursion tree, starting from the leaf regions. For
each region, we wish to compute ${\boundary R}^\cK_{R}$.
Given the encoding of a leaf region $R$ and its coloring, we can compute
${\boundary R}^\cK_{R}$ with a table look-up.

Above the leaf-level, consider a region $R$ with $O(1)$ child regions
$R_1,\ldots,R_d$. From
each child, we get the (encoding of the) graph ${\boundary R_j}^\cK_{R_j}$. Let
$H$ be the union of these graphs. Then
\begin{equation}\label{eq:bottom-up}
{\boundary R}^\cK_{R} = {\boundary R}^\cK_{H}\textnormal{~ where ~}H = \cup_{j=1}^d {\boundary R_j}^\cK_{R_j} = \Big(\cup_{j=1}^d \boundary R_j\Big)^\cK_R\textnormal~.
\end{equation}
The basic point here
is that if $R$ is of level $i$, then  the above only involves $d+1=O(1)$ boundaries of order $O(\sqrt{r_i})$. 
For small regions $R$ of levels $i\leq i_1$, the size is at most $\ell^2$, but the boundary size $|\boundary R|$ is only $O(\ell)$, so the computations can be done by table look-ups.

As we go regions $R$ on levels $i>i_1$, we compute
${\boundary R}^\cK_{R}$ in time linear in the total size of the graphs involved in the bottom-up step \req{eq:bottom-up}, which is $O(\sqrt{r_i})$. For the  whole level $i$, this is $O(n/\sqrt {r_i})=
O(n/\sqrt {2^i})$ time.
Since this is only for $i>i_1$ where $2^i=\Omega(\log^2 n)$, the total cost
of the bottom-up processing is $O(n/\log n)$.

\paragraph{Top-down processing}
Having computed ${\boundary R}^\cK_{R}$ for all regions $R$ on or above leaf-level $i_0$, we now wish to compute all $\boundary R^\cK_{G-E(R)}$
in a top-down processing. The top-most region is $G$, which has no boundary,
so there is nothing to compute.

Suppose we have computed $\boundary R^\cK_{G-E(R)}$ for a
region $R$ with $O(1)$ child regions $R_1,\ldots,R_d$. We compute
$(\boundary {R_j})^\cK_{G-E(R_j)}$ as follows. Let $B=\boundary R\cup\;\bigcup_{j=1}^d \boundary R_j$. Then
\begin{equation}\label{eq:top-down}
\boundary {R_j}^\cK_{G-E(R_j)}=\boundary {R_j}^\cK_{B^\cK_{G-E(R_j)}}\textnormal{ where }B^\cK_{G-E(R_j)}=B^\cK_{\boundary R^\cK_{G-E(R)}\cup\;
  \bigcup_{h\neq j} {\boundary R_h}^\cK_{R_h}}
\end{equation}
As for the bottom-up processing, the top-down processing is easily
implemented in $O(n/\log n)$ total time. As we described the final
goal, when we for each leaf region $R$ have its coloring
and $\boundary {R}^\cK_{G-E(R)}$, then we know which of
its vertices are in the same Kempe chains, and since this
is all encoded in $O(\ell)$-bits, we can get the information with
table look-ups.

\paragraph{Enumeration of Kempe chains}

Consider a non-leaf region $R_0$ with child regions $R_1,\ldots,R_d$ and let $B=\bigcup_{j=1}^d \boundary R_j$, noting that  $\boundary R_0\subseteq B$.
In our bottom-up processing, we computed the ${\boundary R_j}^\cK_{R_j}$  for $j=1,\dots,d$, and afterwards, top-down we computed ${\boundary R_0}^\cK_{G-E(R_0)}$. Putting
them together, we get
\begin{equation}\label{eq:Kempe-boundary}
B^\cK_G={\boundary R_0}^\cK_{G-E(R_0)}\cup\;\bigcup_{j=1}^d {\boundary R_j}^\cK_{R_j}.   
\end{equation}
With this information, which only takes $O(\sqrt{r_i})$ bits, we know
precisely where different Kempe chains intersect $B$. We divide the Kempe chains
intersecting $B$ into two types. Those who intersect $\boundary R_0$
and those who don't. The latter type are said to \emph{stop} in $R_0$.
Equivalently, a Kempe chain stops in $R_0$ if and only if it 
is strictly contained in $R_0$ (that is, contained in $R_0-\boundary R_0$), but not strictly contained
in any child of $R_0$.

Generally, if we have an enumeration order on a set of vertices $U$, then it induces an enumeration order of the Kempe chains intersecting $U$; namely, the order in which they are first encountered.

Recall that our boundary structure $\boundary(R_0;R_1,\ldots,R_d)$ provides unique enumeration of the vertices in $B$, as well as an enumeration of the vertices in each $\boundary R_j$. This then induces orderings of the intersecting Kempe chains, and all this information is implicit when we combine 
$\boundary(R_0;R_1,\ldots,R_d)$ with $B^\cK_G$. This is all represented in $O(\sqrt{r_i})$ bits if $R_0$ is on level $i\geq i_0$.

Combining this information with the corresponding information from the parent of $R_0$, we can follow
how the same Kempe chains appear in the different enumeration orders.

For a leaf region $R$, we identify all intersecting Kempe chains from
$\boundary {R}^\cK_{G-E(R)}$
together with $R$, all represented in $O(\ell)$ bits. The Kempe chains are enumerated as above. The Kempe chains strictly contained in a leaf region $R$ are said to stop in $R$. Then every Kempe chain in $G$ stops in exactly one region.

\paragraph{Implementing Kempe changes}
For every region $R$ in our recursive subdivision, we have an enumeration
of the Kempe chains stopping in $R$. Suppose
we are given a bit-string telling for each of these Kempe chains
if they should be swapped or not, e.g., these bits could be chosen at
random for a random Kempe change. We will now show how to implement these given changes in the coloring of $G$.

The implementation of the Kempe changes is done top-down. Consider a region
$R$. We had an enumeration of the Kempe chains intersecting $\boundary R$, and we assume that we, from the above, are given a bit string for
the decisions for these Kempe chains. Now we add to this the decisions
for the Kempe chains stopping in $R$. If $R$ had children $R_1,\ldots,R_d$,
then for each child $R_j$, we need to translate from these bit-strings to
a bit-string for the Kempe chains intersecting $\boundary R_j$ following
their enumeration for $R_j$. However, we know the translation from  $\boundary(R_0;R_1,\ldots,R_d)$ and $B^\cK_G$ where $B=\bigcup_{j=1}^d\boundary R_j$.

If $R$ is on level $i>i_1$, then this involves $O(\sqrt{r_i})$ Kempe chains, and
all the information can be described using $O(\sqrt{r_i})$ bits.
On levels $i$ such that $i_0<i\leq i_1$, we have $\sqrt{r_i}=O(\ell)$, so the translation can be done with a table look-up.  On the high level,
for $i>i_1$, the translation is trivially done in $O(\sqrt{r_i})$ time.

When we get to a leaf region $R$, from above, we get the bit string
deciding the changes for Kempe chains intersecting $\boundary R$, and we combine this with the bit strings for the Kempe chains strictly contained in
$R$. We now know exactly which vertices in $R$ whose colors we want to swap. Boundary vertices are in multiple leaf regions. The color array for each leaf stores a local copy of every vertex color in that leaf.  During a Kempe change, all these local copies are updated. Copies of the same vertex receive the same Kempe chain decision, so their colors remain equal. They are initialized consistently during preprocessing, and their total number is $O(n)$. All these colors are found in the color array for vertices in $R$. Using a single table look-up 
based on $R$, $\boundary {R}^\cK_{G-E(R)}$, the Kempe chain swap vector, and the color array, we produce an updated color array that replaces the old one. All of these objects are represented in $O(\ell)$ bits, so the color copies in the leaf region are all updated in constant time.

We conclude that we can identify Kempe connections and implement changes in $O(n/\log n)$ time.

\subsection{4-coloring in linear expected time}
We are now ready to implement the randomized approach from Section \ref{sec:near-linear}.
Recall that we first fixed a random color pairing, and then made a random choice whether or not to switch each Kempe chain.

For each region $R$, we just need to generate a random bit string specifying the random choices for all Kempe chains stopping in $R$. The total number of Kempe chains is at most $n$. Assuming that we can generate a random word with $\log n$ bits in constant time, we generate all the bits in $O(n/\log n)$ time. Having specified the random Kempe change, we can implement it as described above in $O(n/\log n)$ time.

Next, we need to check which of the uncolored reducible configurations $Z$ from $\cZ'$ have become 0-extendible so that $Z$ can now be colored. Since they are all strictly contained in leaf regions, we can just check one leaf region $R$ at a time. It is represented with $O(\ell)$ bits, including the new coloring, so with a simple look-up, we perform this task for all
the relevant configurations strictly contained in $R$ in constant time.

As described in Section \ref{sec:near-linear}, we are done with high probability after $O(\log n)$ iterations (and one iteration (Kempe change) takes $O(n/\log n)$ time), thus spending $O(n)$ time in total. 

\section{Deciding and performing Kempe changes deterministically}
In this section, we shall derandomize the previous construction, so as to prove our main result; namely that we can 4-color deterministically in linear time. Technically, doing this derandomization efficiently is by far the most challenging part of this paper. 
The specific challenge is to prove the lemma below, 
which improves construction time of Lemma \ref{lem:det-Kempe-change} from $O(n)$ to $O(n/\log n)$:

\begin{lemma}\label{lem:fast-det-Kempe-change}
    Let $Z_1,\ldots, Z_q$ be some (not necessarily all) non-colored reducible configurations (from the family $\mathcal Z'$ selected in Section~2.1) in a triangulation $G$ that is 4-colored except for the remaining non-colored configurations in $\mathcal Z'$. The colorings of their rings 
    are not yet $0$-extendible. Suppose also that we have computed the recursive division with all the boundary structures $\boundary(R_0;R_1,\ldots,R_d)$ from above (this took linear time, but was only done once in the preprocessing).
    Finally, we assume that the participating configurations $Z_1,\ldots,Z_q$ are marked in the coded leaf representations, so that we do not need to handle them individually.

    We can then, deterministically in $O(n/\log n)$ time, find and execute a Kempe change that is improving for at least a fraction $p=\frac1{3\cdot 2^{18}}$ of the rings, and also tell which rings got  improved.
\end{lemma}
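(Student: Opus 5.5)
We follow the proof of Lemma \ref{lem:det-Kempe-change}, but run every step on the recursive $\vec r$-division with the tabulation machinery of Section \ref{sec:ident-Kemp}. The budget is constant time per region on levels $i_0,\dots,i_1$ and $O(\sqrt{r_i})$ time per region on levels $i>i_1$.

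\textbf{Choosing the color pairing.} For each of the three pairings $\{1,a\},\{2,3,4\}\setminus\{a\}$ we identify Kempe connectivity in $O(n/\log n)$ time, by the bottom-up and top-down computation of $\boundary R^\cK_R$ and $\boundary R^\cK_{G-E(R)}$. For each leaf region $R$, a single table look-up on $R$, its coloring, the marks of $Z_1,\dots,Z_q$ and $\boundary R^\cK_{G-E(R)}$ does the following for every marked configuration strictly inside $R$. It returns whether some swap/no-swap pattern of the at most $18$ intersecting chains is improving under this pairing, and it fixes one such desired pattern, say the lexicographically first. It also returns the count of such rings in $R$. Summing the counts over the $O(n/\log n)$ leaves, we pick the pairing that maximizes the number of live rings, exactly as before. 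This keeps at least a $1/3$ fraction of the rings live.

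\textbf{Fixing swap decisions by conditional expectations.} We order the decisions by the region in which each chain stops and process regions bottom-up. When a region $R_0$ is processed, every chain stopping in $R_0$ is decided, and these chains meet only live rings strictly inside $R_0$. Every chain stopping strictly below $R_0$ is already decided, and every chain crossing $\boundary R_0$ is still undecided. The conditional expectation \req{eq:cond-exp} restricted to the rings inside $R_0$ therefore depends only on how each such ring is attached to the chains meeting $\boundary R_0$ or stopping in $R_0$. This is the information that must be passed upward. A ring $R_i$ inside $R_0$ contributes $2^{-u_i}$ if it is still consistent with the decisions already made. So we summarize the rings inside $R_0$ by a function $f_{R_0}$, indexed by the swap vector of the chains crossing $\boundary R_0$. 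Because a ring meets at most $18$ chains, $f_{R_0}$ decomposes into counters indexed by subsets of size at most $18$ of the chains meeting $\boundary R_0$, each subset labelled with its desired pattern and its value of $u$. These subsets are not arbitrary. Each one consists of chains appearing around a single ring, so planarity lets us represent them as configuration nodes attached to the Kempe nodes of $B^\cK_G$. The resulting graph is a plane graph of size $O(\sqrt{r_i})$ whose structure, without the counters, can be coded in $O(\sqrt{r_i})$ bits together with $\boundary(R_0;R_1,\dots,R_d)$ by the lemma on coding $H\supseteq\bigcup_j\boundary R_j$. When processing $R_0$, we merge the children's counters and then decide the chains stopping in $R_0$ one at a time. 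Each decision compares the sums of the counters that want swap with the sums that want no-swap. Afterwards we project the surviving counters onto the chains crossing $\boundary R_0$, merging counters with identical attachment. On levels $i>i_1$ we do this combinatorially in $O(\sqrt{r_i})$ time, which gives $O(n/\log n)$ in total. At the root every chain is decided, and we execute the Kempe change with the top-down procedure. A final leaf look-up tells which rings improved. They are the rings that were live for the chosen pairing and whose desired pattern agrees with all the decisions.

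\textbf{Main obstacle.} The hard part is the intermediate levels $i_0<i\le i_1$. The structural graph fits in $O(\ell)$ bits there, but the counters take $\Theta(\sqrt{r_i}\log r_i)$ bits, which is too much for one look-up. We have only constant amortized time per such region. Since there are $O(n/r_i)$ regions on level $i$, a region may afford about $r_i/\log n$ operations, and summed over levels this is still $O(n/\log n)$. Our plan is to split each counter array into segments of $\Theta(\ell/\log\ell)$ counters. A single table look-up then takes the structural code, the current swap decisions, one source segment from a child and one target segment of the parent, and either adds the source counters into the target or evaluates the partial swap/no-swap sums. This costs $O((\sqrt{r_i}\log r_i/\ell)^2)$ segment-pair look-ups per region, which is $O(1)$ for small $r_i$ and at most $O(\log^2\ell)$ near $r_{i_1}$. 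We choose $i_1$ slightly below $\log(\ell^2)$ so that the level sums stay geometric and total $O(n/\log n)$. Checking this balance carefully, in particular counter widths growing up to $O(\log n)$ bits, is the step we expect to be delicate.
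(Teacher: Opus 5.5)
Your overall architecture matches the paper's. You choose the pairing by leaf look-ups and decide chains bottom-up in the region where they stop. You summarize live configurations by counters per main type and subtype on a plane graph of size $O(\sqrt{r_i})$, treat levels above $i_1$ combinatorially, and use $O(q^2)$ segment-pair look-ups with $q=\Theta(\sqrt{r_i}\log r_i/\ell)$ at intermediate levels. Your transfer of the children's arrays $A_j$ into the parent array $A_0$ is exactly the paper's.

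The gap is in the other half of the intermediate step: deciding the chains $\vec K$ that stop in $R_0$. These decisions are sequential. Deciding $K_1$ doubles or zeroes the weight of configurations that also touch later chains of the same segment. So per-chain ``partial swap/no-swap sums'' accumulated from the children's segments do not determine the decisions inside a segment. Recomputing them chain by chain costs $\Theta(q)$ look-ups per chain. That is up to $\Theta(q\sqrt{r_i})$ per region, and $\Theta(n\log\log n/\log n)$ in total at $r_i=\Theta(\ell^2)$.

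Nor can one look-up take all counters of the types of $B^{\cKC}_{GR_0}$ that touch a segment of $\Theta(\ell/\log\ell)$ chains. A single chain of the segment may share configurations with $\Theta(\sqrt{r_i})$ different partner chains outside the segment. That gives $\Theta(\sqrt{r_i})$ distinct types, which does not fit in a word.

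The missing idea is to re-type relative to the segment. For each segment $\vec K[h]$, build $\vec K[h]^{\cC}$, whose main types are subsets of $\vec K[h]$ only. Its subtypes record the preferred decisions on those chains and the number $t$ of still-undecided chains outside $\vec K[h]$ (later segments and chains crossing $\boundary R_0$). This collapses all outside partners into the single number $t$. By the planar counting lemma there are $O(|\vec K[h]|)$ such types, so the counter array $B[h]$ fits in $O(\ell)$ bits. You populate it with one look-up per child segment $A_j[g]$, using the decision vector $S$ of earlier segments to discard disagreeing subtypes. Then you simulate all sequential decisions on $\vec K[h]$ in one look-up. This gives $O(q)$ look-ups per segment and $O(q^2)$ per region.

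That counting lemma states the following. A simple plane bipartite graph in which every white vertex has a black neighbour, and no two white vertices have the same black neighbourhood, has only a constant times as many white as black vertices. Degree-1 and degree-2 white vertices are handled directly, and the rest by Euler's formula for bipartite plane graphs. Your claim that the structural graph has size $O(\sqrt{r_i})$ also rests on this lemma. ``Planarity'' alone does not give it, so you should state and prove it.

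Finally, $r_{i_1}$ must remain $\Theta(\ell^2)$, not drop by more than a constant factor. The combinatorial levels cost $O(n/\sqrt{r_{i_1}})$ in total.
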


With Lemma \ref{lem:fast-det-Kempe-change}, we can solve Task \ref{task} in linear time as stated in Theorem \ref{thm:task-linear}. By Theorem \ref{thm:recurrence}, this implies that we can 4-color in linear time, so Lemma \ref{lem:fast-det-Kempe-change} implies Theorem \ref{thm:linear-time}.

The algorithmic proof of Lemma \ref{lem:fast-det-Kempe-change} follows the same steps as Lemma \ref{lem:det-Kempe-change}, but each step must be done logarithmically faster. Below we review the steps.

Before we decide on the Kempe change, we view all the rings as alive. Each time we make a decision about the Kempe change, we kill all rings that can no longer get an improving Kempe change.

First, we need to find a good color pairing: should color 1 be paired with 2, 3, or 4? This decides all the bicolored Kempe chains. Since every configuration is D-reducible, we know, for every ring, that one of these 3 pairings has an improving Kempe change. We choose a pairing such that at least 1/3 of the rings have an improving Kempe change, and thus remain alive.

With a color pairing $a, b$ and $c, d$ decided, we go through the $ab$- and $cd$-colored Kempe chain in an arbitrary order and decide for each whether or not to swap its colors. We always pick the choice that maximizes the conditional expected number of surviving rings from \req{eq:cond-exp}, that is,
\[\sum_{Z_i\textnormal{ is still live}}2^{-u_i}, \textnormal{ where $u_i$ is
the number of  undecided Kempe chains touching } Z_i.\]
When deciding on a Kempe change $K$, we only need the sum restricted to live configurations touching $K$. Deciding $K$ may kill many of these configurations, but for those $Z_i$ that are touched by $K$ but not killed, the survival probability $2^{-u_i}$ is doubled. 
Since we always have $u_i\leq\maxringsize$, we  note that the expectancy is always an integer multiple of $1/2^{\maxringsize}$.
Let us describe each step in more detail. 

\paragraph{Finding the right color pairing}
For any given color pairing, we want to know for each
non-colored configuration $Z$ if there exists a Kempe change that is improving for $Z$; for then $Z$ survives this color pairing.  We want to count the number of survivors with each of the three color pairs, so we can pick one where 1/3 survives.

Recall from the definition of $\cZ'$ that $Z$ is strictly contained in some leaf region $R$, and all we need to know is how the Kempe chains partition the ring of $Z$. From the local perspective of $Z$, a Kempe change tells whether or not to swap the colors within each set of this partition, and for each of the constantly many configurations $Z$ in $\cD$, we know which swaps are improving.

To find out how Kempe chains connect vertices in $R$, we 
just have to consider $R\cup \boundary {R}^\cK_{G-E(R)}$, where $\boundary {R}^\cK_{G-E(R)}$ is as computed 
as in Section \ref{sec:ident-Kemp}. We note that
$R\cup \boundary {R}^\cK_{G-E(R)}$ is represented in 
$O(|R|)$ bits, including coloring, reducible configurations (alive before the color pairing is decided), and all Kempe connections between its vertices. We can therefore, with a single look-up, in constant time, decide how many configurations survive the given color pairing.

Thus, for each color pairing, we first identify the Kempe chains, then count the survivors for that pairing, and choose the one with the most survivors.

With the winning pairing, for each surviving configuration, we have some preferred improving swaps among the Kempe chains it touches. We keep these preferred swaps fixed throughout the following bottom-up decisions.

\paragraph{Bottom-up swap decisions on Kempe chains}
We are going to make the swap decisions on the Kempe chains in a bottom-up fashion, starting from the leaf regions. When we get to a region $R$, we assume that we have already made decisions for all Kempe chains stopping below $R$, so when we get to $R$, we only have to make decisions
for Kempe chains $K$ stopping in $R$. However, to make the right decision on $K$, we need to know about all the live configurations touching $K$.

It is now important to note that we only consider reducible configurations $Z$ strictly contained in leaf regions, so they cannot intersect the boundary of any region. Therefore, 
$K$ can only touch configurations strictly contained in $R$.

When the decisions have been made for all the Kempe chains in the graph, we can implement them all top-down as described at the end of Section \ref{sec:ident-Kemp}.

\paragraph{Kempe chains strictly inside leaf regions}
First we consider the easy case of a leaf region $R$. A Kempe chain $K$ stops in $R$ if and only if $K$ is strictly contained in $R$.

Since we only
consider reducible configurations strictly within regions, $K$ can only
touch reducible configurations strictly within $R$. All the relevant information is therefore contained in $R\cup \boundary {R}^\cK_{G-E(R)}$, including prior decisions on Kempe chains in $R$.
In principle, we want to decide one such Kempe change $K$ at a time, but we can simulate this sequential process for all of them, just using the information from $R\cup \boundary {R}^\cK_{G-E(R)}$. Using a table, taking $R\cup \boundary {R}^\cK_{G-E(R)}$ and the original coloring of $R$ as input, we can produce a bit vector of decisions.

However, for each of the at most $|\boundary R|$ Kempe chains intersecting 
$\boundary R$, we need information about the live reducible configurations that they touch. The representation of such information is described below.

\paragraph{Deciding on batches of Kempe chains}
We will now discuss how to
decide the changes for a batch $\vec K=(K_1,\ldots,K_k)$ of Kempe chains, simulating the decided changes in this order. These
could be the Kempe chains stopping in a given region. 
The decisions depend on
all the reducible configurations touched by these Kempe chains, but there could be $\Omega(n)$ of these even for a single Kempe chain. We want to argue that
the information needed for
the decisions can be contained in $O(k)$ counters.

We need to consider reducible configurations 
$Z$ that touch some Kempe chains from $\vec K$ and which are alive before we make switch decisions on $\vec K$.  
Relative to our Kempe batch $\vec K$, the
\emph{main type} of such a reducible configuration $Z$ is the set of Kempe chains from $\vec K$ that $Z$ touches. We claim that the number of main
types relative to $\vec K$
is only $O(k)$. To see this, we construct a 
plane graph representation $\vec K^\cC$ as follows.
We contract each Kempe chain
from $\vec K$ to a Kempe node. Moreover,
we contract every configuration touched by the Kempe nodes to 
a configuration node. We remove the rest of the graph except for 
the edges between
the Kempe nodes and the configuration nodes, so now we have a plane bipartite graph representing all connections between Kempe chains from $\vec K$ and the configurations they touch. Next we remove all parallel edges since we do not care how many times a given Kempe chain touches  the same configuration. Finally, and this is the big reduction, we remove all but one
configuration node of each main type.
 
We claim that the number of remaining configuration nodes is $O(k)$.
With black Kempe nodes and white configuration nodes, this follows from the following general lemma:
\begin{lemma}\label{lem:Kempe-config}
  Consider a simple plane bipartite graph where the two partite sets are black and white.
  Suppose that every white vertex has at least one black neighbor and no two white vertices are adjacent to the same set of black vertices
  (think of this as a simple plane hypergraph where the white vertices define the hyperedges). Then the number of white vertices is within a constant factor of the number of black vertices.
\end{lemma}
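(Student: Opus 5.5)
We split the white vertices by their degree and bound each class separately in terms of the number $k$ of black vertices, assuming $k\ge 1$ (otherwise there are no white vertices).

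\emph{Degree one.} A white vertex of degree $1$ is adjacent to a single black vertex. Distinct white vertices have distinct neighbourhoods, so each black vertex serves at most one of them. There are therefore at most $k$ such white vertices.

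\emph{Degree two.} A white vertex $w$ with black neighbours $x,y$ corresponds to the pair $\{x,y\}$, and distinct white vertices give distinct pairs. Replace each such $w$ by an edge $xy$ drawn along the path $x\,w\,y$. Paths through different white vertices meet only at black vertices, so the result is a simple plane graph on the $k$ black vertices. It has at most $3k$ edges, hence there are at most $3k$ white vertices of degree two.

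\emph{Degree at least three.} Let $W_3$ be the set of these white vertices and $H$ the subgraph induced by the black vertices and $W_3$. Then $H$ is a simple plane bipartite graph, so $|E(H)|\le 2|V(H)|-4 = 2(k+|W_3|)-4$. Each vertex of $W_3$ contributes at least three edges, so $|E(H)|\ge 3|W_3|$. Combining the two gives $|W_3|\le 2k-4$. This case does not use the distinctness of neighbourhoods at all, since planarity alone suffices.

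Adding the three classes shows the number of white vertices is at most $k+3k+2k=6k$. In the paper's application, parallel edges were already removed, so the graph is simple and the Euler bound for bipartite plane graphs applies. The only point that needs care is the degree-two class, where planarity of the replaced edges must be checked as above; the rest is routine. The statement also says ``within a constant factor'', and only this upper bound is needed for the application to main types.
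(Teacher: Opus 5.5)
Your proof is correct and follows essentially the same route as the paper: the same split of the white vertices by degree, the same suppression of degree-two white vertices to edges of a simple plane graph on the black vertices, and the same Euler bound for bipartite plane graphs on the degree-$\ge 3$ part. One small point: the bound $|E(H)|\le 2|V(H)|-4$ requires $|V(H)|\ge 3$. This holds whenever $W_3\neq\varnothing$, since then $k\ge 3$, but the intermediate inequality $|W_3|\le 2k-4$ is false for $k=1$. The paper avoids this by treating $q\le 2$ separately, and your final bound $6k$ is unaffected.
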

\begin{proof}
Let $q$ be the number of black vertices.  If $q\leq2$, the statement is immediate since there are at most $2^q$ different sets of black neighbors. Assume $q\geq3$. There are at most $q$ white vertices of degree 1. For the white vertices of degree 2, suppress each such white vertex to an edge between its two black neighbors. Since no two white vertices have the same set of black neighbors, this gives a simple plane graph on the $q$ black vertices, and hence there are at most $3q-6$ white vertices of degree 2.

Let $u$ be the number of white vertices of degree at least 3. Delete all other white vertices. The remaining graph is a simple bipartite plane graph with $q+u$ vertices and at least $3u$ edges. Hence by Euler's formula for triangle-free planar graphs, 
\[
 3u\leq 2(q+u)-4,
\]
so $u\leq2q-4$. Thus there are at most $q+3q-6+2q-4=6q-10$ nonempty sets of black neighbors. 
\end{proof}
There can be a linear number of configurations of each main type, but we claim that we can represent all information relevant to our decisions using only a constant number of counters for each main type.

More precisely, we divide each main type into a constant number of \emph{subtypes}. If
a configuration $Z$ touches $s$ different
Kempe chains from $\vec K$, then the
subtype tells which of these should be swapped to get the preferred improving Kempe changes for $Z$. There
are $2^s$ options.

The subtype also tells the number $t$ of undecided 
Kempe chains outside $\vec K$ that $Z$ touches. Then
$s+t$ is bounded by the ring size, which is at most $\maxringsize$. Therefore, the number of subtypes is at most $2^{\maxringsize+1}$.

Each configuration node in $\vec K^\cC$ represents configurations of the same main type and has counters for how many configurations each subtype represents. Recall that we only count configurations that were alive just before the swap decisions on $\vec K$.

With the above 
representation in place, we will now decide on the changes for our Kempe chains $K_1,\ldots,K_k$.
Before we decide on $K_j$, the expected number of surviving configurations touched by $K_j$ is
a sum over all types (main types with subtypes). To contribute, the main type should touch $K_j$ and for the subtype, the preferred swap decisions on $K_1,\ldots,K_{j-1}$ should agree with those we made. If the main type touches $u$ Kempe chains in $K_j,\ldots,K_k$, then its survival probability is $1/2^{u+t}$, so the contribution is the counter times $1/2^{u+t}$.

When we decide on $K_j$, we double the above probability for subtypes agreeing with the decision, and zero the probability of those that do not agree. We make a decision that maximizes the resulting total probability.

For a more efficient implementation, we can, for each subtype, maintain whether it is still alive after the decisions made so far on $\vec K$, and if so, maintain the above number $u+t$ of undecided Kempe chains it touches.
Then, when we make the decision on $K_j$, we only involve the configuration types touching $K_j$. Since the number of configuration types is $O(k)$ and each only touches $O(1)$ Kempe chains, the total time needed to decide on $\vec K$ is $O(k)$ if we use a simple combinatorial algorithm.

Also, if we can represent $\vec K^\cC$
with all its counters in $O(\ell)$ space, then we can make all the decisions with a table look-up in constant time. However, there will 
be some situations where the combinatorial algorithm is too slow and where we have too much information for a table look-up.

\paragraph{Bottom-up construction of relevant main configuration types}
We will now show how to efficiently compute main configuration types for regions. For now we do not include subtypes and counters, and we will ignore if a main type represents a live configuration. Relative to a batch $\vec K$ of Kempe chains, the main type of a configuration is the subset of $\vec K$ that it touches. Recall that our description of $\vec K^\cC$ contracted each Kempe chain to a single Kempe node and a configuration of each main type to a configuration node. In Lemma \ref{lem:Kempe-config}, this is used
to show that the number of main types is linear in the size of $\vec K$.  However,
for the representation of Kempe chains intersecting the boundary of $R$, instead of contracting these Kempe chains completely, we will
use the plane graph $\boundary R^\cK_G$ where the
Kempe chains in $G$ intersecting $\boundary R$ are only contracted inside the faces of $\boundary R$.

We will augment the graph $\boundary R^\cK_G$ with configuration nodes from the main types
from $R$ relative to the Kempe chains intersecting $\boundary R$. More precisely, we construct $\boundary R^\cKC_{GR}$ as follows. From $R$, we keep an arbitrary 
configuration
from each main type relative to the Kempe chains intersecting $\boundary R$. No other configurations from $R$ are relevant. We contract each main-type configuration as we constructed $\vec K^\cC$. If one of these configurations has more than one edge to the same Kempe chain, we delete all but one of them. Finally, we do the contractions in the faces of $\boundary R$ of the Kempe chains intersecting $\boundary R$. What remains is $\boundary R$ with Kempe nodes and main type configuration nodes in the faces.

The number of Kempe chains intersecting $\boundary R$ is bounded by the size of $\boundary R$, and by Lemma \ref{lem:Kempe-config},  their number also bounds the number of main configuration types since this number would be the same if we completely contracted all the Kempe chains. The number of main configuration types is thus $O(|\boundary R|)$ and we already know that $\boundary R^\cK_{G}$ has size $O(|\boundary R|)$, so we conclude that $\boundary R^\cKC_{GR}$ has size $O(|\boundary R|)$.

Our representation of 
$\boundary R^\cK_{G}$ came with an enumeration of the Kempe chains
intersecting $\boundary R$,
and in $\boundary R^\cKC_{GR}$, we can see the Kempe chains touched by a configuration node.
We now get a natural enumeration of the main configuration types in $\boundary R^\cKC_{GR}$.
For each main type of a configuration $Z$, we first sort the indices of the Kempe chains touched by $Z$. Second, we order the main types lexicographically with respect to these sorted lists.
Thus we get an enumeration of all main configuration types in $\boundary R^\cKC_{GR}$.

\begin{figure}[htbp]
    \centering
    \includegraphics[width=0.55\linewidth]{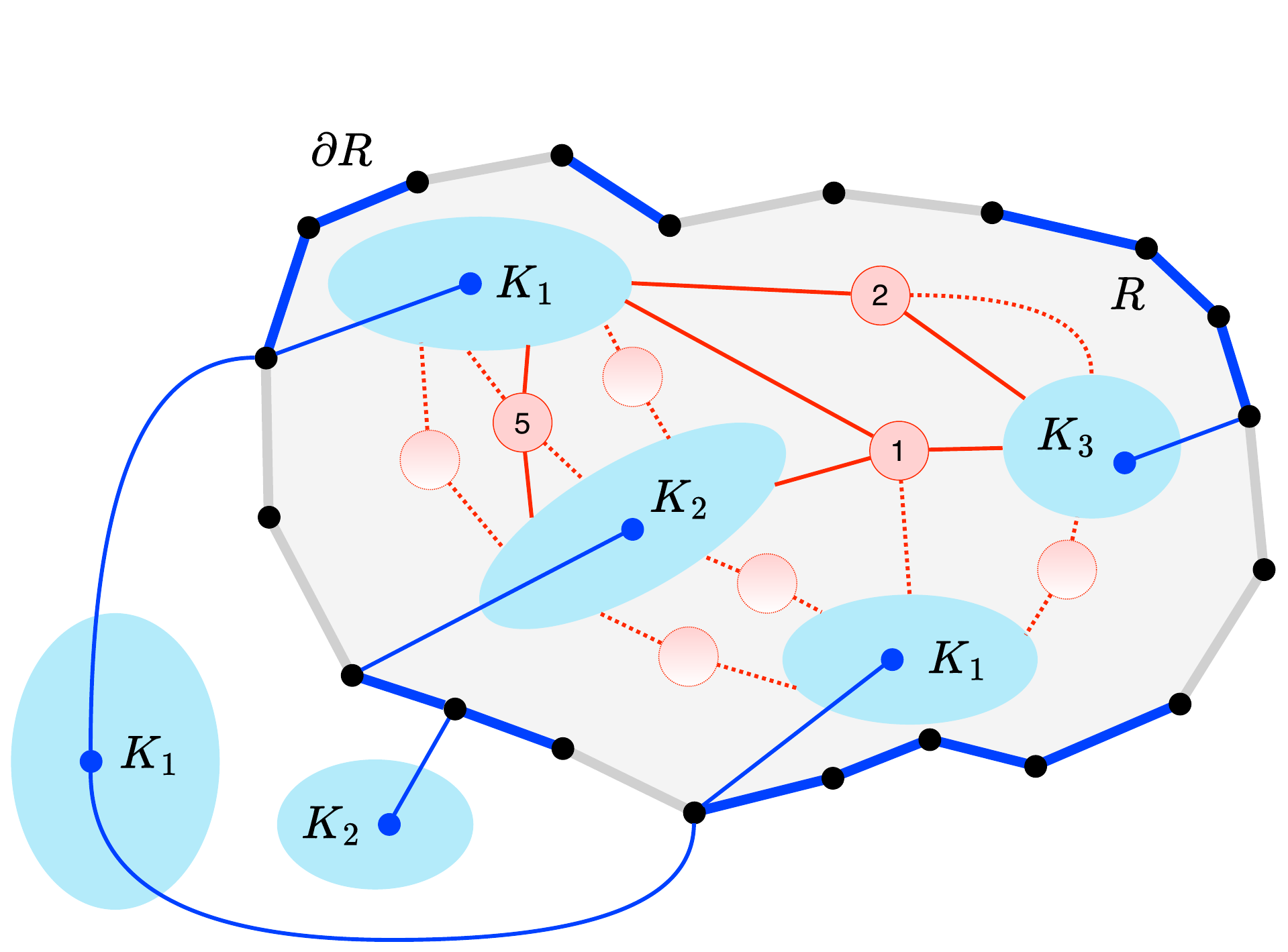}
    \caption{The figure of $\boundary R^{\cK\cC}_{GR}$. There are three main configuration types: $\{K_1,K_2\}$, $\{K_1,K_2,K_3\}$, $\{K_1,K_3\}$, and each counter is $5,1$ or $2$. For simplicity, subtypes are ignored in these counts.}
    \label{fig:boundary_kempe_conf}
\end{figure}

Now we consider
a region $R_0$ with child regions $R_1,\ldots, R_d$. Our goal
is to construct $\boundary {R_0}^\cKC_{GR_0}$ given $\boundary {R_j}^\cKC_{GR_j}$ for each child $R_j$. Let $B=\bigcup_{j=1}^d
\boundary R_j$, and recall that $\boundary R_0\subseteq B$. From  \req{eq:Kempe-boundary} we have 
\[B^\cK_G=
{\boundary R_0}^\cK_{G-E(R_0)}\cup\;\bigcup_{j=1}^d {\boundary R_j}^\cK_{R_j}. \]
When we combine $B^\cK_G$ with the boundary structure $\boundary (R_0;R_1,\ldots,R_d)$, we have all the information needed to follow how the Kempe chains intersect the boundaries of the holes of each individual boundary $\boundary R_j$, including the parent boundary $\boundary R_0$.

We now want to construct 
$B^\cKC_{GR}$ which
is $B^\cK_G$ with a configuration node from each
main type in $R_0$ added. We will
construct it from the 
${\boundary R_j}^\cKC_{GR_j}$
that we get from the child regions $R_j$.

We want to fill all the faces $\Phi$ of
$B$. Either $\Phi$ is a hole of $R_0$,
an empty triangle of $G$, or an inner face of some $\boundary R_j$. If $\Phi$ is a hole of $R_0$, then $\Phi$ is outside $R_0$ and then we just copy
in the Kempe nodes from ${\boundary R_0}^\cK_{G-E(R_0)}$. If $\Phi$ is
an inner face of some $\boundary R_j$, we simply copy the configuration of Kempe nodes and main type configuration nodes from this face in ${\boundary R_j}^\cKC_{G R_j}$.
As a final cleaning step, note that we, from different $R_j$, may have configurations of the same main type. In this case, we delete all but one of them.

From $B^\cKC_{GR_0}$, we can easily construct ${\boundary R_0}^\cKC_{GR_0}$ and  $\vec K^{\cC}$, where $\vec K$ consists of the Kempe chains stopping in $R_0$.
To get ${\boundary R_0}^\cKC_{GR_0}$ we essentially just delete all Kempe chains in $\vec K$ from $B^\cKC_{GR_0}$
since these are exactly the Kempe chains that do not intersect 
$\boundary R_0$. Next, inside each inner face of 
$\boundary R_0$, we contract the remaining Kempe components to single Kempe nodes. 
Finally, since we have removed the Kempe chains from $\vec K$, the configuration nodes may now be incident to fewer Kempe chains, and it is the incident Kempe chains that define the type. The types thus get reduced, and several of the previous types may get reduced to the same type. However, we keep only one configuration node of each type, and we delete types that become empty in the sense of not being incident to any remaining Kempe chain. The construction of $\vec K^{\cC}$
is similar, but this time, remove the Kempe chains from  $B^\cKC_{GR_0}$ that are not in $\vec K$.

We let $\boundary (R_0;R_1,\ldots,R_d)^\cKC_{GR}$ denote the combination of all the above information. It provides us
${\boundary R_j}^\cKC_{GR_j}$ for $j=0,\ldots,d$, and it tells us how to translate between the enumerations of Kempe chains and main configuration types in the different ${\boundary R_j}^\cKC_{GR_j}$, as well as for
$\vec K^{\cC}$.

In the following bounds for the combined structure, let $i$ be the level of $R_0$; the total size of the parent and child boundaries is $O(\sqrt{r_i})$. Note that all the above information in $\boundary (R_0;R_1,\ldots,R_d)^\cKC_{GR}$
is coded in planar graphs of total size 
$O(\sqrt{r_i})$, hence represented with $O(\sqrt{r_i})$ bits. 
We claim that we can also compute 
$\boundary (R_0;R_1,\ldots,R_d)^\cKC_{GR}$ in 
$O(\sqrt{r_i})$ time. The only slightly non-trivial issues are sorting the main types. Normally sorting takes super-linear time, but not in our case. Recall that the main type of a configuration $Z$ is represented as the sorted list of some of the Kempe chains from $\boundary (R_0;R_1,\ldots,R_d)^\cK_{G}$ that are touched by $Z$.
These Kempe chains are indexed within $\boundary (R_0;R_1,\ldots,R_d)^\cK_{G}$, so all relevant indices are $O(\sqrt{r_i})$.
The list has constant length because $Z$ can only touch a constant number of Kempe chains (indeed at most 18), so sorting the index list is trivially done in constant time. The total number of main types nodes in the input is $O(\sqrt{r_i})$. We can therefore sort the main types lexicographically in $O(\sqrt{r_i})$ time using the classic Radix Sort. We do this sorting when we want to enumerate and index the main types. We also use it when we want to identify nodes of the same type in order to remove duplicates.
Thus we can compute $\boundary (R_0;R_1,\ldots,R_d)^\cKC_{GR}$ in 
$O(\sqrt{r_i})$ time. As usual, we note that if
$\sqrt{r_i}=O(\ell)$, then we can compute everything in constant time with a table look-up. This, however, does not include counters and sub-types.

\paragraph{Involving subtypes and counters}
We will now describe combinatorially how to deal with subtypes and counters. We are again considering a region $R_0$ with children $R_1,\ldots,R_d$. Inductively, what we get from each
$R_j$ is $\boundary {R_j}^\cKC_{GR_j}$, which has a configuration node of each main type relative to the Kempe chains intersecting $\boundary {R_j}$. Such a configuration node $T$ represents all live configurations $Z$ touching the same subset of
these Kempe chains. When we say that $Z$ is live, we mean that its preferred improving change agrees with all the swap decisions made for Kempe chains stopping below $R_j$. The live configurations represented by $T$ are divided into constantly many subtypes based on their preferred improving swap decisions for the constantly many Kempe chains that they touch and which intersect $\boundary R_j$. These are the only undecided Kempe chains that they touch. If $T$ represents no live configurations, it should have been deleted. For each subtype, we have a counter for how many live configurations it represents. Our enumeration of main types in 
$\boundary {R_j}^\cKC_{GR_j}$ can be easily extended to the subtypes, and then the counters are all contained in an array $A_j$. We get such an array $A_j$ from each child $R_j$, and in the recursive step we fill out $A_0$ for $R_0$.

With $B=\bigcup_{j=1}^d
\boundary R_j$, we will now construct $B^\cKC_{GR_0}$, this time including subtypes with counters. Several child
structures $\boundary {R_j}^\cKC_{GR_j}$ could have the same main type, that is, live configurations touching the same set of undecided Kempe chains. When we delete duplicates, we just add up the counters for the subtypes.

Next we want to construct $\vec K^C$ where $\vec K$ are the Kempe chains stopping in $R_0$. Each main type $T$ from
$\boundary {R_j}^\cKC_{GR_j}$ touched a subset of the Kempe chains intersecting $\boundary {R_j}^\cKC_{GR_j}$. From this set, we get a reduced type $T'$ that removes all Kempe chains not in $\vec K$, so the reduced type only tells which Kempe chains from $\vec K$ it touches.

Now consider the subtypes of $T$. They are determined by the preferred swap decisions for all Kempe chains intersecting $\boundary R_j$ that they touch. They only need preferred swap decisions for the Kempe chains from $\vec K$ that they touch. However, the subtype also counts the number $u$ of undecided Kempe chains touched by the configuration that intersect $\partial R_j$ and are not in $\vec K$.

We now have all the counters needed to make the swap decisions for the Kempe chains in $\vec K$, and we record these decisions.

The next step is to remove Kempe chains in $\vec K$ from $B^\cKC_{GR_0}$. First
we have to remove all subtypes not agreeing 
with the decisions we just made on $\vec K$. 
The remaining subtypes are only distinguished by the decisions on Kempe chains touching $\boundary R_0$, and then we can construct $\boundary {R_0}^\cKC_{GR_0}$ 
as described earlier, once again just adding up counters for subtypes when duplicates are removed. The counters for all types are contained in an array $A_0$, completing the recursive step.

Even including counters of subtypes, a simple algorithm can do the above processing of $R_0$ in time $O(\sqrt{r_i})$. As usual, this suffices if $R_0$ is on level $i$ with $r_i=\Omega((\log n)^2)$.

However, with counters, we must process more information. If $R_0$ is a region on level $i$, then the counters for subtypes may need $\log r_i$ bits, so the total space is $O(\sqrt{r_i}\log r_i)$, and 
if this is $\omega(\ell)$, then we cannot do the processing just using a look-up table. This means that we can only use simple table look-ups for 
$r_i=O((\log n)^2/(\log\log n)^2)$. Below, we handle the remaining cases.

\subsection{Transferring counters.}
We will now show how to handle 
regions on level $i$ where
\[(\log n)^2/(\log\log n)^2<r_i<(\log n)^2.\]
Let $q$ be an integer such that $q\ell=\Theta(\sqrt{r_i}\log r_i)$. Then $q=O(\log\log n)$ is the factor by which the information including counters is too large for a single table look-up. 

However, we will show how to handle this case in $O(q^2)$ time.
There are $O(n(\log\log n)^2/
(\log n)^2)$ such regions, so the total time for all these regions becomes
$O(n(\log\log n)^4/(\log n)^2)=o(n/(\log n))$, which is fine for the proof of Lemma \ref{lem:fast-det-Kempe-change}.

The construction below is elementary, but this is only because we have been very careful above to make sure that we have a compact understanding of the connection between different enumerations of types.

We now consider a region $R_0$ on level $i$ with children $R_1,\ldots,R_d$.
Together with each child $R_j$, we have ${\boundary R_j}^\cKC_{G R_j}$ in $O(\ell)$ bits plus the array $A_j$ of counters for all types in
${\boundary R_j}^\cKC_{G R_j}$. The total processing of $R_0$ is summarized in Table \ref{fig:lookups}.

\begin{table}
\begin{center}
\small
\renewcommand{\arraystretch}{1.35}
\begin{tabularx}{0.9\textwidth}{@{}p{0.30\textwidth}Xp{0.08\textwidth}}
\toprule
Look-up & Additional input  & Iterations \\
\midrule
Construct $\vec K[h]^C$ & The segment $\vec K[h]$ & $O(q)$ \\
Update $B[h]$ & $\vec K[h]^C$, $S$, $A_j[g]$, and current $B[h]$ & $O(q^2)$ \\
Decide swaps for $\vec K[h]$ & $\vec K[h]^C$ and populated $B[h]$  & $O(q)$ \\
Update $A_0[h]$ & Completed $S$, $A_j[g]$, and current $A_0[h]$ & $O(q^2)$ \\
\bottomrule
\end{tabularx}
\end{center}

    \caption{
The table summarizes the look-up steps in the order they are done for the processing of the region $R_0$.
Whenever required, the input also contains
$\partial(R_0;R_1,\ldots,R_d)^{KC}_{GR}$,
the relevant segment indices and lengths, the counter width, and the
length of the decided prefix in $S$. All these inputs, together with
any one source array segment and one target array segment, occupy $O(\ell)$ bits.}
    \label{fig:lookups}
    \end{table}

Let $\vec K$ be Kempe chains stopping in $R_0$. We already know how to compute  $\vec K^\cC$ with a single table look-up, and this includes all the main types relative to $\vec K$. However, to make the swap decisions for $\vec K$, we need all the subtype counters from $A_1,\ldots,A_d$.

The simple idea now is that we partition $\vec K$ into 
$q$ segments $\vec K[h]$ each with $\ell^-=O(\sqrt {r_i}/q)$ Kempe chains, and make the swap decisions for one segment $\vec K[h]$ at a time. The point here is that an array of $O(\ell^-)$ counters, each with $O(\log r_i)$ bits, fits in $O(\ell)$ bits, and this can be used for a table look-up.
As we proceed, we add the decisions to one swap decision vector $S$ for all of $\vec K$. The vector $S$ is stored in $O(\ell)$ bits.

To make the decisions for $\vec K[h]$, we first construct $\vec K[h]^\cC$ describing the configuration types relevant to $\vec K[h]$. By Lemma \ref{lem:Kempe-config}, there are only $O(|\vec K[h]|)=O(\ell^-)$ main types, and only a constant factor more subtypes. We will populate a corresponding counter array $B[h]$, noting that it only needs $O(\ell)$ bits. When 
$B[h]$ has been populated, we can make the swap decisions for $\vec K[h]$ using a single table look-up.

From $\boundary(R_0;R_1,\ldots,R_d)^\cKC_{GR}$ and $\vec K[h]^\cC$, we have all the information we need to map type indices from the $A_j$ to type indices from $B[h]$. Combining this with the vector $S$ of prior swap decisions for $\vec K[1..h-1]$, we also know which subtypes should be eliminated because they did not agree with the prior decision. 

We now also divide $A_j$ into $q$ segments of $\ell^-$ counters.
One such segment $A_j[g]$ is stored in $O(\ell)$ bits. If 
we combine this with $B[h]$, $\boundary(R_0;R_1,\ldots,R_d)^\cKC_{GR}$, $\vec K[h]^\cC$, and $S$, then we know how to update $B[h]$ with the counters from $A_j[g]$. This is only $O(\ell)$ bits of information, so we can compute $B[h]$ updated with $A_j[g]$ using a single look-up.

We have to update $B[h]$ using the counters from all segments in all $A_j$, but there are only $O(q)$ such segments in total, so we need only $O(q)$ table look-ups. When done populating $B[h]$, we make the swap decisions for $\vec K[h]$ as described above using a single table look-up.

We have to repeat this for each of the $q$ segments $\vec K[g]$ of Kempe chains stopping in $R_0$. The total number of table look-ups to compute all swap decisions for $\vec K$ is therefore $O(q^2)$, as claimed.

Next, to complete the recursive step, we need to compute the array $A_0$ of subtype counters for ${\boundary R_0}^\cKC_{G R_0}$. 

We use the exact same idea as above, dividing $A_0$ into segments $A_0[h]$. Using  $A_0[h]$, $A_j[g]$, $\boundary(R_0;R_1,\ldots,R_d)^\cKC_{GR}$, and $S$, we know to update
$A_0[h]$ with $A_j[g]$, and we can again do this with a single table look-up. The total number of segment pairs considered is $O(q^2)$. 
We conclude that we can complete the recursive step for $R_0$
in $O(q^2)$ time, as claimed. 

This completes the proof of 
Lemma \ref{lem:fast-det-Kempe-change}, hence of Theorem \ref{thm:task-linear} and Theorem \ref{thm:linear-time}.

\bibliographystyle{abbrv} 
\bibliography{paper,4CT}

@article {steinberger2010unavoidable,
    AUTHOR = {Steinberger, John P.},
     TITLE = {An unavoidable set of {D}-reducible configurations},
   JOURNAL = {Trans. Amer. Math. Soc.},
  FJOURNAL = {Transactions of the American Mathematical Society},
    VOLUME = {362},
      YEAR = {2010},
    NUMBER = {12},
     PAGES = {6633--6661},
   MRCLASS = {05C15},
  MRNUMBER = {2678989},
       DOI = {10.1090/S0002-9947-2010-05092-5},
       URL = {http://dx.doi.org/10.1090/S0002-9947-2010-05092-5},
}

@article {kempe1879,
    AUTHOR = {Kempe, Alfred B.},
     TITLE = {On the Geographical Problem of the Four Colours},
   JOURNAL = {Amer. J. Math.},
  FJOURNAL = {American Journal of Mathematics},
    VOLUME = {2},
      YEAR = {1879},
    NUMBER = {3},
     PAGES = {193--220},
}

@article {AllaireSwart78,
    AUTHOR = {Allaire, Frank and Swart, Edward Reinier},
     TITLE = {A systematic approach to the determination of reducible
              configurations in the four-color conjecture},
   JOURNAL = {J. Combin. Theory Ser. B},
  FJOURNAL = {Journal of Combinatorial Theory. Series B},
    VOLUME = {25},
      YEAR = {1978},
    NUMBER = {3},
     PAGES = {339--362},
   MRCLASS = {05C15},
  MRNUMBER = {516267},
       DOI = {10.1016/0095-8956(78)90010-2},
       URL = {http://dx.doi.org/10.1016/0095-8956(78)90010-2},
}

@book {AppelHaken89,
    AUTHOR = {Appel, Kenneth and Haken, Wolfgang},
     TITLE = {Every planar map is four colorable},
    SERIES = {Contemporary Mathematics},
    VOLUME = {98},
      NOTE = {With the collaboration of J. Koch},
 PUBLISHER = {American Mathematical Society, Providence, RI},
      YEAR = {1989},
     PAGES = {xvi+741},
      ISBN = {0-8218-5103-9},
   MRCLASS = {05C15},
  MRNUMBER = {1025335},
       DOI = {10.1090/conm/098},
       URL = {http://dx.doi.org/10.1090/conm/098},
}

@article {4ct2,
    AUTHOR = {Appel, K. and Haken, W. and Koch, J.},
     TITLE = {Every planar map is four colorable. {II}. {R}educibility},
   JOURNAL = {Illinois J. Math.},
  FJOURNAL = {Illinois Journal of Mathematics},
    VOLUME = {21},
      YEAR = {1977},
    NUMBER = {3},
     PAGES = {491--567},
   MRCLASS = {05C15},
  MRNUMBER = {0543793},
       URL = {http://projecteuclid.org/euclid.ijm/1256049012},
}

@article {4ct1,
    AUTHOR = {Appel, K. and Haken, W.},
     TITLE = {Every planar map is four colorable. {I}. {D}ischarging},
   JOURNAL = {Illinois J. Math.},
  FJOURNAL = {Illinois Journal of Mathematics},
    VOLUME = {21},
      YEAR = {1977},
    NUMBER = {3},
     PAGES = {429--490},
   MRCLASS = {05C15},
  MRNUMBER = {0543792},
       URL = {http://projecteuclid.org/euclid.ijm/1256049011},
}

@article {birkhoff,
    AUTHOR = {Birkhoff, George D.},
     TITLE = {The Reducibility of Maps},
   JOURNAL = {Amer. J. Math.},
  FJOURNAL = {American Journal of Mathematics},
    VOLUME = {35},
      YEAR = {1913},
    NUMBER = {2},
     PAGES = {115--128},
  MRNUMBER = {1506176},
       URL = {https://doi.org/10.2307/2370276},
}

@article {RSST,
    AUTHOR = {Robertson, Neil and Sanders, Daniel and Seymour, Paul and
              Thomas, Robin},
     TITLE = {The four-colour theorem},
   JOURNAL = {J. Combin. Theory Ser. B},
  FJOURNAL = {Journal of Combinatorial Theory. Series B},
    VOLUME = {70},
      YEAR = {1997},
    NUMBER = {1},
     PAGES = {2--44},
   MRCLASS = {05C15},
  MRNUMBER = {1441258},
       URL = {https://doi.org/10.1006/jctb.1997.1750},
}

@article {Franklin1922,
    AUTHOR = {Franklin, Philip},
     TITLE = {The {F}our {C}olor {P}roblem},
   JOURNAL = {Amer. J. Math.},
  FJOURNAL = {American Journal of Mathematics},
    VOLUME = {44},
      YEAR = {1922},
    NUMBER = {3},
     PAGES = {225--236},
  MRNUMBER = {1506473},
       URL = {https://doi.org/10.2307/2370527},
}

@inproceedings {RSST-STOC,
    AUTHOR = {Robertson, Neil and Sanders, Daniel P. and Seymour, Paul and
              Thomas, Robin},
     TITLE = {Efficiently four-coloring planar graphs},
 BOOKTITLE = {Proceedings of the {T}wenty-eighth {A}nnual {ACM} {S}ymposium
              on the {T}heory of {C}omputing ({P}hiladelphia, {PA}, 1996)},
     PAGES = {571--575},
 PUBLISHER = {ACM, New York},
      YEAR = {1996},
      ISBN = {0-89791-785-5},
   MRCLASS = {05C85 (05C15)},
  MRNUMBER = {1427555},
       DOI = {10.1145/237814.238005},
       URL = {https://doi.org/10.1145/237814.238005},
}

@article{tait1880dual,
  title   = {Note on a theorem in geometry of position},
  author  = {Peter Guthrie Tait},
  journal = {Trans. Roy. Soc. Edinburgh},
  volume  = {29},
  pages   = {657--660},
  year    = {1880}
}

@article{madison1897,
  title   = {Note on the history of the map-coloring problem},
  author  = {Isabel Maddison},
  journal = {Bull. Amer. Math. Soc.},
  volume  = {3},
  pages   = {257},
  year    = {1897}
}

@article{nearlinear,
 title={The Four Color Theorem with Linearly Many Reducible Configurations and Near-Linear Algorithm},
 author={Inoue, Yuta and Kawarabayashi, Ken-ichi and Miyashita, Atsuyuki and Mohar, Bojan and Thomassen, Carsten and Thorup, Mikkel},
 journal={ https://arxiv.org/abs/2603.24880}, 
 year={2026},
 note = {{E}xtended Abstract at FOCS'26.}}

@article{turan1984succinct,
  author    = {Gy{\"o}rgy Tur{\'a}n},
  title     = {On the succinct representation of graphs},
  journal   = {Discrete Applied Mathematics},
  volume    = {8},
  number    = {3},
  pages     = {289--294},
  year      = {1984},
  doi       = {10.1016/0166-218X(84)90126-4},
  url       = {https://doi.org}
}

@article{arlazarov1970economical,
  author     = {Arlazarov, V. L. and Dinic, E. A. and Kronrod, M. A. and Faradzev, I. A.},
  title      = {On economical construction of the transitive closure of a directed graph},
  journal   = {Doklady Akademii Nauk SSSR},
  volume    = {194},
  number    = {3},
  pages     = {487--488},
  year      = {1970},
  publisher = {Russian Academy of Sciences}
}

@article{keeler1995short,
  author    = {Rick Keeler and Jeff Westbrook},
  title     = {Short encodings of planar graphs and maps},
  journal   = {Discrete Applied Mathematics},
  volume    = {58},
  number    = {3},
  pages     = {239--252},
  year      = {1995},
  doi       = {10.1016/0166-218X(93)E0150-W},
  url       = {https://doi.org}
}

@inproceedings{PT06-pred,
  author       = {Mihai P{\u{a}}tra{\c{s}}cu and
                  Mikkel Thorup},
  editor       = {Jon M. Kleinberg},
  title        = {Time-space trade-offs for predecessor search},
  booktitle    = {Proceedings of the 38th Annual {ACM} Symposium on Theory of Computing,
                  Seattle, WA, USA, May 21-23, 2006},
  pages        = {232--240},
  publisher    = {{ACM}},
  year         = {2006},
  url          = {https://doi.org/10.1145/1132516.1132551},
  doi          = {10.1145/1132516.1132551},
  bibsource    = {dblp computer science bibliography, https://dblp.org}
}

@inproceedings{miltersen1996lower,
  author    = {Miltersen, Peter Bro},
  title     = {Lower bounds for static dictionaries on {RAMs} with bit operations but no multiplication},
  booktitle = {Proceedings of the 23rd International Colloquium on Automata, Languages and Programming (ICALP 1996)},
  series    = {Lecture Notes in Computer Science},
  volume    = {1099},
  pages     = {442--453},
  year      = {1996},
  publisher = {Springer},
  doi       = {10.1007/3-540-61440-0_149}
}

@book{kernighan1978c,
  author    = {Kernighan, Brian W. and Ritchie, Dennis M.},
  title     = {The C Programming Language},
  year      = {1978},
  publisher = {Prentice-Hall},
  address   = {Englewood Cliffs, NJ},
  isbn      = {0-13-110163-3}
}

@inproceedings{KMS13,
  author    = {P.N. Klein and S. Mozes and C. Sommer},
  title     = {Structured Recursive Separator Decompositions
for Planar Graphs in Linear Time},
  booktitle = {Proc. 45th STOC},
  year      = {2013},
  pages     = {505--514},
  note      = {Full version in CORR, arxiv:1208.2223},
  url          = {https://doi.org/10.48550/arXiv.1208.2223}
}

@article{Fre87,
  author    = {G.N. Frederickson},
  title     = {Fast Algorithms for Shortest Paths in Planar Graphs, with Applications},
  journal   = {{SIAM} J. Comput.},
  volume    = {16},
  number    = {6},
  pages     = {1004--1022},
  year      = {1987},
  url       = {https://doi.org/10.1137/0216064},
  doi       = {10.1137/0216064},
  bibsource = {dblp computer science bibliography, https://dblp.org}
}
\appendix
\section{Our notion of divisions versus those in \cite{KMS13}}\label{app:translate}
In this appendix, we describe how to translate between our notion of $r$-divisions and those in \cite{KMS13}. Our presentation assumes that the reader is familiar with \cite{KMS13} so that we only have to focus on the differences.

\paragraph{Difference in definitions}
Our definition of divisions is a bit cleaner and stricter than the one in \cite{KMS13}, but is satisfied by the construction in 
\cite{KMS13}. 

In both papers, a \emph{division} of $G$ is a collection of regions $R_1,\ldots,R_k$. In this paper, we require that every face of $G$ is an inner face of exactly one region $R_i$. 

In \cite{KMS13} they only require that every edge in $G$ is in at least one region. They  say a boundary edge \cite{KMS13}
is any edge that is shared by multiple regions. Like us, they further require that such shared edges are on holes in any region they belong to. However, with their definition, an edge on a hole in a region  might not be a boundary edge. An example could be a division into three regions meeting in the 
sides of an inner triangle $\Delta$ of $G$. If none of 
them contain $\Delta$, then 
they each have a side of $\Delta$ on a hole, but these sides are not 
shared, so according to \cite{KMS13} they are not boundary edges.
With our requirements to divisions, the above situation cannot happen because some region would have to contain $\Delta$. As noted in Section \ref{sec:basics}, with our requirements for divisions, an edge $uv$ is on a hole of $R_i$ (our definition of a boundary edge) if and only if $uv$ is shared with another region $R_j$ (the definition of a boundary edge in \cite{KMS13}).

The construction from \cite{KMS13} implies that a vertex shared by regions is incident to a shared edge, 
so, for edges, a vertex is on a hole of a region if and only if it is shared by another region.

A nice feature of defining boundary vertices and edges as those on a hole of a region is that we only need to know a region $R$ to know its boundary $\boundary R$. We 
do not need to know anything about other regions of a division.

\paragraph{The construction of divisions is the same (almost)}

We now consider the recursive bi-division in \cite[\S 3]{KMS13}.

Starting from a region $R$, in step 7, they find a simple-cycle separator $C$. This is a simple cycle in the graph $G'$ where they have placed a star in each hole of $G$. This means the cycle follows edges of $G$ and can cross a hole via the star, but because it is simple, it can cross any given hole at most once.

The cycle partitions the plane, and we have two closed 
discs $\Phi_0=\overline{\textnormal{Int}(C)}$ and $\Phi_1=\overline{\textnormal{Ext}(C)}$. For $i=0,1$, we have a child region $R_i$
consisting of all parts of  $R$ contained in 
$\Phi_i$. In particular, this implies that we get a bipartition of the inner faces of $R$.
Inductively, it follows that if we start from $R=G$ where all faces of $R$ are inner, then the recursive division will partition the inner faces between the resulting regions.

In \cite{KMS13}, they call our inner faces natural faces, and they do indeed point out that the bi-division bi-partitions these faces. However, they make a minor mistake in their 
definition of the child regions $R_i$.
They only include edges on the sides of the inner faces or in $C\cap R$, but as we can see in their Figure 1 (d), we can get edges with holes on both sides, and if $R$ had such an edge $uv$ that is not in $C$, then \cite{KMS13} would not pass $uv$ down to any of its children. This could lead to disconnected regions contradicting their unproven Lemma~4, and it also leads to a contradiction of their Definition 7.
Here we include those forgotten edges in $R_0$ and $R_1$ by giving them \emph{all} parts of $R$ contained in $\Phi_0$ and $\Phi_1$, respectively.

\paragraph{Bound on the number of children}
A final difference is that we state that regions on level $i$ have $O(r_i/r_{i-1})$ children on level $i-1$. This bound is not in  \cite{KMS13}, but it follows by a modification of their proof in \cite[\S 3.3]{KMS13} that
their $r$-division has only $O(N/r)$ regions where $N=|V|$ (while modifying their proof, we also had to fix a gap in their argument, see discussion above Lemma \ref{lem:S-bound} below). Their recursive subdivision is topologically embedded in a binary recursive subdivision represented by a binary tree $\cT$. Each node $x\in\cT$ represents a region $R(x)$. Let $n(x)=|R(x)|$ and $b(x)=|\boundary R(x)|$. For $r<n(x)$, let
$S_r(x)$ be the set of the highest descendants $y$ of $x$ such that $n(y)\leq r$ (possibly $y=x$). The regions represented by $S_r(x)$ form a division, but not yet an $r$-division.
Next let 
$S'_r(y)$ be the set of the highest descendants $z$ of $y$ such that $b(z)\leq c'\sqrt r$ for some constant $c'$. Finally define
$S^*_r(x)=\bigcup\{S'_r(y)\mid {y\in S_r}(x)\}$.
Relating back to our levels, if $R(x)$ is a region on level $i$ then
its child regions on level $i-1$ form the set 
$S^*_{r_{i-1}}(x)$. Moreover, by definition of level $i$, $n(x)\leq r_i$ and $b(x)=O(\sqrt{r_i})$. 

For general $x\in\cT$ and $r\leq n(x)$, we will argue that $|S^*_r(x)|=O(n(x)/r+b(x)/\sqrt r)$. 
In \cite[\S 3.3]{KMS13} they
argue that if $x$ is
on depth $0\pmod 3$ in $\cT$,
then  $\sum_{y\in S_r(x)} b(y)\leq
b(x)+O(n(x)/\sqrt r)$. Working on those depths is all they need in \cite[\S 3.3]{KMS13}, but the same holds for arbitrary depths since the binary subdivision uses separator cycles of size at
most $O(\sqrt{n(x)})=O(n(x)/\sqrt r)$.

Next we want to argue that $|S_r(x)|=O((n(x)+b(x))/r)$. In \cite{KMS13}, they argue this (for the special case where $x$ is the root with $b(x)=0$) by referring to the parents of $y$ of 
$z\in S_r(x)$, since they have $n(y)>r$. However, these parents can overlap a lot, e.g., $y$ could have another child $y'$ containing almost all of $y$, and with a child $z'\in S_r(x)$.
We therefore present a different proof that fixes this issue in \cite{KMS13}.
\begin{lemma}\label{lem:S-bound}
$|S_r(x)|=O((n(x)+b(x))/r)$.
\end{lemma}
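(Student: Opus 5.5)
We bound the leaves $S_r(x)$ of the subtree $\cT_r(x)$ consisting of $x$ and its proper descendants $y$ with $n(y)>r$, charging each $z\in S_r(x)$ either to a disjoint large piece of $R(x)$ or to boundary. Every internal node $y$ of $\cT_r(x)$ has $n(y)>r$ and exactly two children in $\cT$, and each element of $S_r(x)$ is a child of such a node. It is therefore enough to bound the number of internal nodes of $\cT_r(x)$; $|S_r(x)|$ is at most twice that number plus one. Call an internal node $y$ \emph{branching} if both children have $n>r$, \emph{unary} if exactly one child has $n>r$, and \emph{bottom} if neither does. In any binary tree the branching nodes are fewer than the bottom nodes, so we only need to bound bottom nodes and unary nodes.

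\emph{Bottom nodes.} Bottom nodes are pairwise incomparable, so their regions share no inner faces. Each has more than $r$ vertices, and its $\Omega(r)$ inner faces are disjoint from those of the other bottom nodes. The number of inner faces of $R(x)$ is $O(n(x))$, which gives $O(n(x)/r)$ bottom nodes. This is the point where KMS13's parent argument fails, and incomparability repairs it.

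\emph{Unary nodes: main obstacle.} A long chain of unary nodes can peel off small children $z$ with $n(z)\le r$ one by one. Arbitrarily many such $z$ would break the bound, so here we must charge each $z$ to boundary or to size. Take a unary node $y$ with small child $z$. If $n(z)\ge r/c$ for a constant $c$, charge $z$ to its $\Omega(r)$ inner faces. These are disjoint over all $z\in S_r(x)$, so such $z$ contribute $O(n(x)/r)$ in total. If $n(z)<r/c$, the edges of $R(z)$ are either on the separator cycle $C_y$ or on $\boundary R(y)$. We would then use the bound quoted above, $\sum_{y\in S_r(x)} b(y)\le b(x)+O(n(x)/\sqrt r)$, applied at every depth: it shows that the total boundary of all regions in $S_r(x)$ is $O(b(x)+n(x)/\sqrt r)$.

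This does not immediately bound the count, because a tiny $z$ can have tiny boundary. We therefore need a lower bound on $b(z)$, or on $n(z)$, for every separated piece. I would get it from the construction in KMS13, which alternately balances vertices, boundary vertices and holes across three consecutive levels. Over any three consecutive unary steps, either the large side loses a constant fraction of its boundary, or it loses $\Omega(n(y))\ge\Omega(r)$ vertices to the small side. The second case is impossible for unary chains of small children except $O(n(x)/r)$ times overall, by disjointness. In the first case, a chain starting at boundary size $b$ can shrink geometrically only $O(\log b)$ times before $b(y)=O(\sqrt{n(y)})$. After that, balance forces a vertex split giving a side with $\Omega(n(y))>r/c$ vertices.

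Summing, the unary-chain steps are charged to the initial boundary $b(x)$ plus the separator boundaries created at branching or bottom nodes, each of size $O(\sqrt{n(y)})$, and the geometric decay gives an amortized $O(1)$ steps per $\sqrt r$ of boundary. This yields the $O(b(x)/\sqrt r)$ part of the bound. It is within $O((n(x)+b(x))/r)$ for the regime used, because $b(x)=O(\sqrt{n(x)})$ when $x$ is a node on level $i$. I expect this unary-chain amortization, tied to the three-phase balancing rule of KMS13, to be the hard and delicate step.
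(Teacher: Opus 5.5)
\paragraph{Gap: the unary chains.} Your reduction to counting the nodes of $\cT_r(x)$, and the observation that branching nodes are outnumbered by bottom nodes, are fine. The gap is in the treatment of unary chains, and it comes from a false dichotomy. You write that over three consecutive steps \emph{either} the boundary shrinks by a constant factor \emph{or} $\Omega(n(y))$ vertices go to the small side. In fact the second alternative always occurs. At every node $y$ of depth $0\pmod 3$ the KMS13 separator is vertex-balanced, $\max\{n(y_0),n(y_1)\}\le 0.75\,n(y)+O(\sqrt{n(y)})<0.8\,n(y)$. Since $n(y_0)+n(y_1)>n(y)$, this gives $\min\{n(y_0),n(y_1)\}>0.2\,n(y)$, whatever the boundary sizes are.

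This is the missing idea, and it closes the argument immediately. On a chain of unary nodes, every node $y$ of depth $0\pmod 3$ must have $n(y)<5r$; otherwise both children exceed $r$ and $y$ would branch. Each such node also shrinks the size by a factor $0.8$, so after at most $8$ of them the chain has left $\cT_r(x)$. Hence unary chains have constant length (about $24$), and $|\cT_r(x)|$ is within a constant factor of the number of bottom nodes. Each bottom node has a child in $S_r(x)$ with more than $r/2$ vertices, and these children are distinct. The vertex count $\sum_{z\in S_r(x)}n(z)\le n(x)+\sum_{z\in S_r(x)}b(z)\le n(x)+b(x)+O(n(x)/\sqrt r)$ then gives the lemma. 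Alternatively, every third unary node peels off a child with more than $0.2r$ vertices, and your own disjointness count already handles those.

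\paragraph{What your fallback argument actually gives.} Having missed this, you rely on a boundary-decay amortization that is only asserted. You claim a chain shrinks geometrically $O(\log b)$ times and that this amortizes to $O(1)$ steps per $\sqrt r$ of boundary charged to separators; neither claim is derived. Note also that $O(\log b)$ steps per chain would not by itself give a constant. Even if granted, the argument yields only $O(n(x)/r+b(x)/\sqrt r)$. That is weaker than the stated $O((n(x)+b(x))/r)$ unless $b(x)=O(n(x)/\sqrt r)$. Restricting to $b(x)=O(\sqrt{n(x)})$ proves a different statement from the lemma. (That weaker form would in fact suffice for the later estimate of $|S^*_r(x)|$, but it is not the lemma, and you have not established it either.)

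\paragraph{Smaller slips.} You have not justified that a region with more than $r$ vertices has $\Omega(r)$ inner faces, since many of its vertices may lie on holes. Count vertices via the sum above instead; this is exactly where the $b(x)$ term in the statement comes from. Also, the edges of $R(z)$ do not all lie on $C_y\cup\boundary R(y)$; only its boundary edges do.
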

\begin{proof}
We want to argue that a constant fraction of the nodes $z\in S_r(x)$ have $n(z)>r/2$.

Consider any node $y\in\cT$ with children $y_0$ and $y_1$. We assume $n(y)$ is bigger than some large enough constant. From the construction in \cite{KMS13}, we have some simple inequalities. Since 
$R(y_0)$ and $R(y_1)$ divide $R(y)$ with some overlap, we have
$n(y_i)\leq n(y)$ and 
$n(y_0)+n(y_1)>n(y)$.
Moreover, if $y$ is at depth $0 \bmod 3$ in $\cT$, then by \cite[(2)]{KMS13},
\begin{equation}\label{eq:max}\max\{n(y_0),n(y_1)\}
    \leq 0.75 n(y)+O(\sqrt{n(y)})<0.8 n(y).
\end{equation}
This also implies
\begin{equation}\label{eq:min}\min\{n(y_0),n(y_1)\}> 0.2 n(y).
\end{equation}
Now consider the subtree $T$ of $\cT$ rooted in $x$ and containing all nodes $y$ with $n(y)>r$. The parents of nodes $z\in S_r(x)$ are exactly the nodes in $T$ with fewer than two children. Also, $|S_r(x)|\leq 2|T|$. 
Moreover, if $y$ is a leaf in $T$, then it has at least one child $z\in S_r(x)$ with $n(z)>r/2$.

We need to argue that the size of $T$ is linear in the number of leaves. To do that, we will bound the maximal length of a path of single-child-nodes in $T$. 

Every third node $y$ on such a path is at depth $0\bmod 3$, and if $n(y)\geq 5r$, then by \req{eq:min}, the children have $n(y_i)>0.2 n(y)\geq r$, so both children are in $T$, that is, $y$ is a branching node.

However, if $n(y)<5r$ and $y$ is at depth $0\mod 3$, then, by \req{eq:max}, both children have $n(y_i)\leq 0.8 n(y)$. This means that after passing at most 8 such nodes, all children
have $n(y_i)\leq r$, implying that they are not in $T$, i.e., that we have reached a leaf.
We conclude that the maximal single-child-path in $T$ is of length 24, hence that the total number of nodes in $T$ is linear in the number of leaves. This completes the proof that a constant fraction of the nodes $z\in S_r(x)$ have $n(z)>r/2$.

Since only boundary vertices are shared in the division represented by $S_r(x)$, we know that
\[\sum_{z\in S_r(x)}n(z)\leq n(x)+\sum_{z\in S_r(x)}b(z)\leq n(x)+b(x)+O(n(x)/\sqrt r)=O(n(x)+b(x)).
\]
Since a constant fraction of the nodes $z\in S_r(x)$ have $n(z)>r/2$, we conclude that 
$|S_r(x)|=O((n(x)+b(x))/r)$.
\end{proof}
By \cite[(13)]{KMS13}, for every $y\in S_r(x)$, we have $|S_r'(y)|=O(1+b(y)/\sqrt{r})$, so we conclude that 
\[|S^*_r(x)|=\sum_{y\in S_r(x)}|S_r'(y)|=O\left(|S_r(x)|+
\frac{b(x)+O(n(x)/\sqrt r)}{\sqrt r}\right)=O(n(x)/r+b(x)/\sqrt r).\]
Finally, we insert $n(x)\leq r_i$, $b(x)=O(\sqrt{r_i})$, and $r=r_{i-1}$, and get
$|S^*_{r_{i-1}}(x)|=O(r_i/r_{i-1}+\sqrt{r_i}/\sqrt{r_{i-1}})=O(r_i/r_{i-1})$.

\end{document}